\newcommand{\sir}{\mathrm{SIR}}
\newcommand{\Pb}{\mathbb{P}}
\newcommand{\Eb}{\mathbb{E}}
\newcommand{\Lc}{\mathcal{L}}
\DeclarePairedDelimiter\floor{\lfloor}{\rfloor}
\newcommand*{\acro}[3][]{\newacronym[#1]{#2}{#2}{#3}}
\newtheorem{corollary}{Corollary}
\newtheorem{theorem}{Theorem}
\theoremstyle{approximation}
\newcommand{\black}{\textcolor{black}}
\DeclareMathOperator*{\R}{\mathbb{R}}
\let\mybibitem\bibitem
\renewcommand{\bibitem}[1]{%
  \ifstrequal{#1}{nature}
    {\color{blue}\mybibitem{#1}}
    {\color{black}\mybibitem{#1}}%
}
\begin{document}
\title{Performance Analysis of Mobile Cellular-Connected Drones under Practical Antenna Configurations}
% Sky Mobility-Aware Cellular Networks: CoMP meets Cellular-Connected Drones for Enhanced Mobility Support
% UAVs
 %\author{Author 1, Author 2, and Author 3}
% \author{Ramy Amer, Walid Saad, and Nicola Marchetti}

%\author{\IEEEauthorblockN{Ramy Amer\IEEEauthorrefmark{1},			
%Walid Saad\IEEEauthorrefmark{2},
%Nicola Marchetti\IEEEauthorrefmark{1}}\\
%\IEEEauthorblockA{\IEEEauthorrefmark{1}CONNECT, Trinity College, University of Dublin, Ireland}\\
%\IEEEauthorblockA{\IEEEauthorrefmark{2}Wireless@VT, Bradley Department of Electrical and Computer Engineering, Virginia Tech, Blacksburg, VA, USA}\\
%\IEEEauthorblockA{email:\{ramyr, nicola.marchetti\}@tcd.ie, walids@vt.edu}}

 \author{\IEEEauthorblockN{Ramy Amer\IEEEauthorrefmark{1},			
Walid Saad\IEEEauthorrefmark{2},
Boris Galkin\IEEEauthorrefmark{1}, 
Nicola Marchetti\IEEEauthorrefmark{1}}
%\\
\IEEEauthorblockA{\IEEEauthorrefmark{1}CONNECT, Trinity College, University of Dublin, Ireland,}
%\\
\IEEEauthorblockA{\IEEEauthorrefmark{2}Wireless@VT, Bradley Department of Electrical and Computer Engineering, Virginia Tech, Blacksburg, USA,}
\IEEEauthorblockA{Emails:\{ramyr, galkinb, nicola.marchetti\}@tcd.ie, walids@vt.edu.}
\thanks{This publication has emanated from research conducted with the financial support of Science Foundation Ireland (SFI) and is co-funded under the European Regional Development Fund under Grant Number 13/RC/2077, and the U.S. National Science Foundation under Grants CNS-1836802 and IIS-1633363.} }
\maketitle
\begin{abstract}
Providing seamless connectivity to unmanned aerial vehicle user equipments (UAV-UEs) is very challenging due to the encountered line-of-sight interference and reduced gains of down-tilted base station (BS) antennas. For instance, as the altitude of UAV-UEs increases, their cell association and handover procedure become driven by the side-lobes of the BS antennas. In this paper, the performance of cellular-connected UAV-UEs is studied under 3D practical antenna configurations. Two scenarios are studied: scenarios with static, hovering UAV-UEs and scenarios with mobile UAV-UEs. For both scenarios, the UAV-UE coverage probability is characterized as a function of the system parameters. The effects of the number of antenna elements on the UAV-UE coverage probability and handover rate of mobile UAV-UEs are then investigated. Results reveal that the UAV-UE coverage probability under a practical antenna pattern 
is worse than that under a simple antenna model. Moreover, vertically-mobile UAV-UEs are susceptible to \emph{altitude handover} due to consecutive crossings of the nulls and peaks of the antenna side-lobes.

%%%%%%%%%%%%%%%%%%%%%%%%%%%%%%%%%%%%%%%%%%%%%%%%%%%%%%7510820%%%%%%%%%%%%%%%
\end{abstract}
\begin{IEEEkeywords}
UAV-UEs, 3D antenna, altitude handover.	%  stochastic geometry		management
\end{IEEEkeywords}

\vspace{-0.2 cm}
\section{Introduction}	
\vspace{-0.1 cm}
A tremendous increase in the use of \acp{UAV}, i.e., drones, in a wide range of applications, ranging from aerial surveillance and safety to product delivery, is anticipated  in the foreseeable future \cite{8660516,mozaffari2016unmanned,saad2019vision,kishk20193}. In such applications, UAVs need to communicate with each other as well as with ground \acp{UE} using wireless cellular connectivity.  

Cellular-connected UAVs have attracted attention in cellular network research in both industry and academia due to their ability to ubiquitously communicate \cite{azari2017coexistence,lin2018sky,85317111,amer2020caching,8756296}.  % amer2019mobility,8756296,amer2019mobility,8528463
However, the \ac{BS} antennas of current cellular networks are tilted downwards to provide connectivity to  ground \acp{UE} rather flying UAV-UEs \cite{azari2017coexistence}. Hence, UAV-UEs have to be served from the side-lobes of the BS antennas. Moreover, the UAV-UEs, especially at high altitudes, are dominated by \ac{LoS} communication links. These key characteristics of the UAV-UE communications pose new technical challenges on their cell association, handover procedure, and the overall achievable performance \cite{amer2019mobility}. 
		% banagar20193gpp

% Moreover, the UAV-UEs, especially at high altitudes, are dominated by \ac{LoS} interference and reduced \acp{BS} antenna gain \cite{azari2017coexistence}. These key characteristics of UAV-UE communication channels pose new technical challenges on \blue{their association \cite{8422685} \ac{LoS} and gain of the antenna.} 

%For instance, cellular-connected UAV communication possesses substantially different characteristics that pose new technical challenges which include: dominance of \ac{LoS} interference and reduced \acp{BS} antenna gain \cite{azari2017coexistence}.
%
%% To enable ubiquitous communications to the UAVs, cellular-connected UAV, whereby UAVs are integrated into the cellular network as new UAV-UEs, is a promising technology that has drawn significant attention \cite{azari2017coexistence,amer2019mobility,85317111,8528463}.

In this regard, the authors in \cite{8692749} studied the feasibility of supporting drone operations using existing cellular infrastructure. It is shown that, under a simple antenna model, the cell association heavily depends on the availability of \ac{LoS} links to the serving BS. In \cite{8713514}, similar results are verified for air-to-ground communication between UAV-BSs and ground \acp{UE}. Moreover, in \cite{amer2019mobility}, we showed that coordinated transmissions can effectively mitigate the effects of \ac{LoS} interference of high-altitude UAV-UEs. While the works in \cite{8692749,8713514,amer2019mobility} considered the possibility of \ac{LoS} communication, they assumed a simple antenna configuration that is modeled as a step function of two gain values, namely, main- and side-lobe gains. However, practical antenna patterns resemble a sequence of main- and side-lobes with nulls between consecutive lobes. Such a \ac{3D} antenna pattern plays a crucial rule on the UAV-UE cell association and handover procedure, which is ignored in most of the prior works \cite{8692749,8713514,amer2019mobility}. Performance analysis of UAV-UEs under \ac{3D} practical antenna models was done in the recent works  \cite{8528463,8756719,abs-1804-04523,HCC:3325421.3329770}. For instance, based on system-level simulations, the authors in \cite{8528463} showed that the cell association of UAV-UEs is mainly dependent on the side-lobes. Moreover, in \cite{8756719}, the authors characterized the association probability and \ac{SIR} under nearest-distance and maximum-power based associations. 

The performance of mobile UAV-UEs  under practical antenna patterns has been studied in recent works  \cite{abs-1804-04523} and \cite{HCC:3325421.3329770}. For instance, based on system-level simulation, the authors in \cite{abs-1804-04523} showed that, due to the \ac{LoS} propagation conditions to many interfering cells, it is difficult for the UAV-UEs to establish and maintain connections to the network, which also leads to increased handover failure rates. Moreover, based on experimental trials in \cite{HCC:3325421.3329770}, the authors showed that drones are subject to frequent handovers once the typical flying altitude is reached. However, the results presented in these works are based on simulations and measurements. Also, although the work in \cite{8756719} has characterized the \ac{SINR} at UAV-UEs served from \ac{3D} antennas, there was no characterization of important performance metrics such as the coverage probability. Moreover, this work only considered a static UAV-UE scenario. \emph{As a first step in this direction, we seek to characterize the performance of static and mobile UAV-UEs under practical antenna configurations}.

Compared with this prior art \cite{8692749,8713514,amer2019mobility,8528463,8756719,abs-1804-04523,HCC:3325421.3329770}, the main contribution of this paper is a rigorous analysis that provides an in-depth understanding of the performance of UAV-UEs under practical antenna configurations. In particular, we consider a network of ground BSs equipped with more than two antenna array elements to provide cellular connectivity to UAV-UEs. For this network, we characterize the coverage probability for two scenarios, specifically, static and mobile UAV-UEs. Moreover, we investigate the handover rate of mobile UAV-UEs in order to provide important design guidelines and understand novel handover aspects of UAV-UEs, such as the altitude handover. Our results show that the number of antenna elements  controls the UAV-UE handover rate, while its impact on the coverage probability is marginal if the handover cost is low.  
% especially at low
% \emph{To the best of our knowledge, this paper provides the first rigorous analysis for static and mobile UAV-UEs under practical antenna configuration, and provides a novel concept of altitude handover.}
% 
% We investigate the achievable performance of UAV-UEs under a BS antenna configuration suggested by 3GPP for two scenarios, namely, static and vertically mobile UAV-UEs. For each scenario, we characterize the coverage probability of UAV-UEs under nearest  and maximum received power association schemes.
 \begin{figure} [!t]	%[!t] %%    [htbp]		uav-comp_antenna
  \vspace{-0.3 cm}
\centering
\includegraphics[width=0.35 \textwidth]{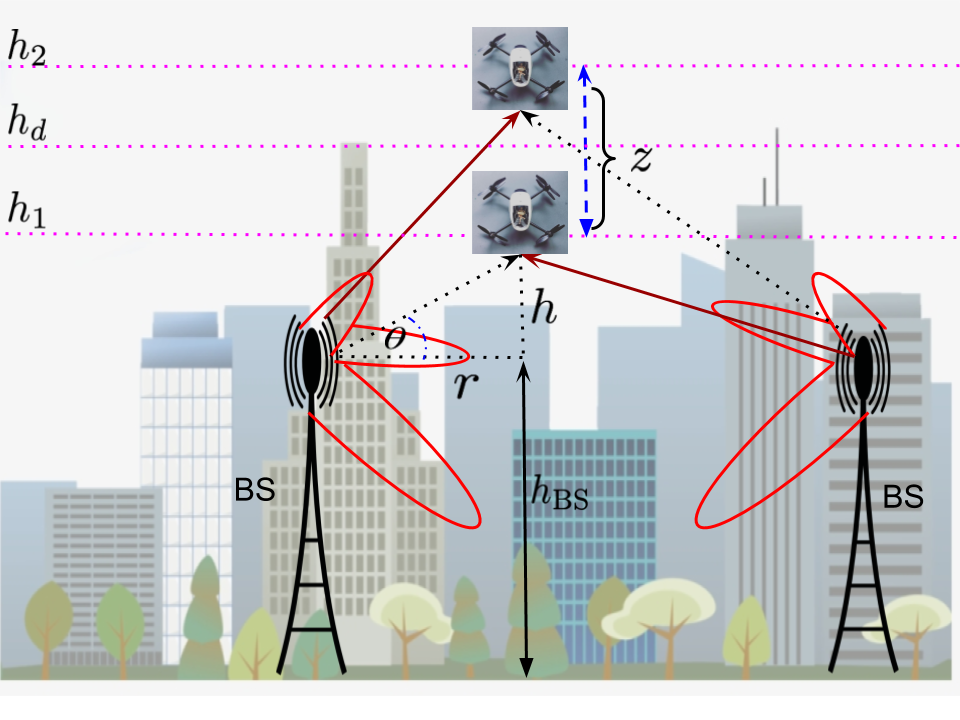}	%letter	vert_mob.png	
\caption {Illustration of the proposed system model in which 3D antenna-equipped ground BSs serve high-altitude static (or mobile) UAV-UEs.}
\label{system-model-comp0}
 \vspace{-0.4 cm}
\end{figure}
% \red{The elevation angle $\theta={\rm arctan}(\frac{h}{r})$.}
% Concept of Altitude handover for drones. Due to the effect of antenna side-lobes and nulls between these lobes, an UAV-UE at altitude $h_1$ is associated to the right BS while it is associated to the left BS at altitude $h_2$.

 \vspace{-0.1 cm}
\section{System Model}
 \vspace{-0.2 cm}
\subsection{Network Model}
We consider a downlink transmission scenario from a terrestrial cellular network to cellular-connected UAV-UEs. We assume that ground \acp{BS} are distributed according to a \ac{2D} homogeneous \ac{PPP}  $\Phi_b=\{ b_i \in \mathbb{R}^2, \forall i \in \mathbb{N}\}$ with intensity $\lambda_b$. All \acp{BS} have the same transmit power $P_t$ and are deployed at the same height $h_{\rm BS}$. We consider a number of UAV-UEs that can be either  static or mobile based on the application. Static UAV-UEs hover at a fixed altitude $h_d$, while mobile ones can make up and down movements within two altitude thresholds $h_1$, and $h_2$. We set $h_d=\frac{h_1+h_2}{2}$, i.e., $h_d$ is the mean flying altitude of mobile UAV-UEs. As shown in Fig.~\ref{system-model-comp0}, we consider high-altitude UAV-UEs where $h_1$, $h_d$, and $h_2$ are above the BS height $h_{\rm BS}$. Each UAV-UE has a single antenna and receives downlink signals from a ground \ac{BS}. Each ground BS is equipped with a directional antenna array composed of $N_t$ vertically-placed elements. Two association schemes are considered for the UAV-UEs: Nearest and \ac{HARP} associations.

% Next, we first characterize the serving distance distribution, and then, we employ it to derive upper and lower bounds on the coverage probability of static UAV-UEs.
% 
%We use the \ac{LoS} ball to model the blockage effect as recently adopted in the literature [9], [20]. Specifically, we define a \ac{LoS} radius $R$, which represents the distance between a receiver and its nearby blockages, and the \ac{LoS} probability of a certain link is one within $R$ and zero outside the radius.  

%\subsection{\red{\textbf{Angles and gain of main-lobes and side-lobes}}}
\vspace{-0.0 cm}
\subsection{Channel Model}
\vspace{-0.1 cm}
We consider a wireless channel that is characterized by both small-scale and large-scale fading. For the large-scale fading, the channel between a ground \ac{BS} and an UAV-UE is described by the \ac{LoS} and \ac{NLoS} components, which are considered separately along with their probabilities of occurrence \cite{azari2017coexistence}. This assumption is apropos for such ground-to-air channels that  often exhibit \ac{LoS} communication   \cite{azari2017coexistence} and \cite{8713514}. For small-scale fading, we adopt a Nakagami-$m_v$ model as done in \cite{azari2017coexistence} for the channel gain, whose \ac{PDF} is given by:
%\begin{align}
$f(\omega) = \frac{2m_v^{m_v} \omega^{2m_v-1}}{\Gamma(m_v)} e^{-m_v \omega^2}$.    
%$\end{align}
The fading parameter $m_v$ is assumed to be an integer for tractability, where $v\in \{l,n\}$ accounts for \ac{LoS} and \ac{NLoS} communications, respectively. Given that $\omega \sim $ Nakagami$(m_v)$, it directly follows that the channel gain power  $\chi=\omega^2 \sim \Gamma(m_v,1/m_v)$, which represents a Gamma \ac{RV} whose shape and scale parameters are $m_v$ and $\frac{1}{m_v}$, respectively. Hence, the \ac{PDF} of channel power gain distribution will be:  
%\begin{align}
$f(\chi) = \frac{m_v^{m_v} \chi^{m-1}}{\Gamma(m_v)} {\rm exp}\big(-m_v\chi\big)$.  
%\end{align}

%\blue{The incorporation of the blockages induces different path loss laws for \ac{LoS} and \ac{NLoS} links.} 
%%
%We will focus on the analysis where the \red{typical static and mobile receiver} is associated with a ground BS that is assumed to always be of \ac{LoS} link, and the interference stems from both \ac{LoS} and \ac{NLoS} interferers. The relevant transmitters thus form a PPP, denoted as $\Phi_b$, with density $\lambda_b$ in a disk of radius $R$ centered at the origin.  

\ac{3D} blockage is characterized by the mean number of $\eta$ buildings/\SI{}{km}$^{2}$, the proportion $a$ of the total land area occupied by buildings,  and the height of buildings modeled by a Rayleigh \ac{PDF} with a scale parameter $c$. The probability of having a \ac{LoS} communication from a \ac{BS} at horizontal-distance $r$ from an UAV-UE  is hence given, similar to  \cite{8692749} and \cite{8713514}, as %\red{Jacek delta - Tabassum Series expansions}
\begin{align}
\label{prob-los}
\Pb_{l}(r) = \prod_{n=0}^{{\rm max}(0,o-1)}\Bigg[1 - {\rm exp}\Big(- \frac{\big(h_{\rm BS} + \frac{h(n+0.5)}{o}\big)^2}{2c^2}\Big) \Bigg], 
\end{align}		% ${\rm max}(p-1,0)$				% IT WAS o+11111
where $h$ is the difference between the UAV-UE altitude and \ac{BS} height and $o=\floor{r\sqrt{a\eta}}$. 
% Different terrain structures and environments can be considered by varying the tuple $(a,\eta,c)$. 
%, which depends on whether the UAV-UE is static or mobile
% the power radiation pattern for a conventional beam-former is the 

\vspace{-0.2 cm}
\subsection{Antenna Model}
The ground BSs are equipped with directional antennas of fixed radiation patterns and with a down-tilt angle $\varphi$. This is typically achieved by equipping the BS with a \ac{ULA} of $N_t$ vertically-placed elements, that are assumed to be omni-directional along the horizontal dimension \cite{3gpp-2017}. Along the vertical dimension, the power radiation pattern is equal to the array factor times the radiation pattern of a single antenna. The $N_t$ antenna elements are equally spaced with adjacent elements separated by half of the wavelength. With the down-tilt angle $\varphi$, the overall array gain in the direction $\theta$ is given by \cite{8756719}:  
\begin{align}
% \label{ant-gain}
G(\theta) = \underbrace{\frac{1}{N_t} \frac{{\rm sin}^2\frac{N_t\pi}{2}\big({\rm sin}(\theta)-{\rm sin}(\varphi)\big)}{{\rm sin}^2\frac{\pi}{2}\big({\rm sin}(\theta)-{\rm sin}(\varphi)\big)}}_{A_f(\theta)} \times 
\underbrace{ 10^{{\rm min}(-1.2(\frac{\theta}{\delta})^2,\frac{G_m}{10})} }_{G_e(\theta)},
\nonumber 
\end{align}
where $G_m$ gives the threshold for antenna nulls, $A_f(\theta)$ is the array factor of the \ac{ULA}, $G_e(\theta)$ is the element power gain of each antenna along the vertical dimension, i.e., the elevation angle $\theta$, and $\delta$ is the half power beamwidth. For simplicity, we assume that the gain $G_e(\theta)$ is a positive constant on the range of the elevation angle $\theta\in[0,\frac{\pi}{2}]$, and 0 otherwise (i.e., zero front-to-back power ratio as in \cite{vu2014analytical}).\footnote{Note that the elevation angle is bounded as $\theta\in[0,\frac{\pi}{2}]$ from the assumption of high-altitude UAV-UEs, i.e., $h_1,h_d,h_2>h_{\rm BS}$.} 
% the BS always steers to the direction of the served UAV-UE and the gain
% \red{array factor} of the \ac{ULA}
% each antenna element has the element power gain in dB given by
% To resemble the antennas of most existing cellular BSs as specified by 3GPP [2], 
% conventional beamforming technique
%
%
%
%Along the vertical dimension, each antenna element has the element power gain in $\SI{}{dB}$ given by
%\begin{align}
%G_e(\theta) = -{\rm min}\Bigg(12\big(\frac{\theta}{\Omega}\big)^2,G_m\Bigg),
%\end{align}
%where $\theta\in[-90^{\circ},90^{\circ}]$ is the elevation angle in degree, $\Omega$ is the half power beamwidth  and $G_m$ gives the threshold for antenna nulls. 
%
%
%Let $G_f(\theta)=10{\rm log}10(A_f(\theta))^2$ denote the array power gain in $\SI{}{dB}$. The overall antenna gain at elevation angle $\theta$ is given:
%\begin{align}
%G_v(\theta)= G_e(\theta) + G_f(\theta).
%\end{align}
%To calculate the gain of the antenna as a ratio rather than in $\SI{}{dB}$, we have $G_e(\theta)$ as a ratio given by (neglecting $G_m$)
%\begin{align}
%G_e(\theta)=10^{-1.2(\frac{\theta}{\Omega})},
%\end{align}
%
%And the total gain as a ratio will be: 
%\begin{align}
%\label{antenna-gain}
%G_v(\theta)= \frac{1}{N} \frac{{\rm sin}^2\frac{N\pi}{2}\big({\rm sin}(\theta)\big)}{{\rm sin}^2\frac{\pi}{2}\big({\rm sin}(\theta)\big)} \times 10^{-1.2(\frac{\theta}{\Omega})} 
%\end{align}			
% $h=h_d-h_{\rm BS}$ and 
From Fig.~\ref{system-model-comp0}, $\theta={\rm arctan}(\frac{h}{r})$, where $r$ is a realization of the \ac{RV} $R$ which represents the horizontal distance between a ground \ac{BS} and the projection of an UAV-UE. If $G_e(\theta)$ is set to one, and for a zero down-tilt angle, the antenna array gain is simplified to: 			% of the ULA
  \begin{align}
\label{antenna-gain} 
G(r,h) &= \frac{1}{N_t} \frac{{\rm sin}^2\frac{N_t\pi}{2}\big({\rm sin}({\rm arctan}(\frac{h}{r}))\big)}{{\rm sin}^2\frac{\pi}{2}\big({\rm sin}({\rm arctan}(\frac{h}{r}))\big)}. 
\end{align}

Hence, the antenna gain plus path loss for the \ac{LoS} and \ac{NLoS} components will be: $\zeta_v(r)= A_v G(r,h) \big(r^2 + h^2\big)^{-\alpha_v/2}$, where $v\in\{l,n\}$, $\alpha_{l}$ and $\alpha_{n}$ are the path loss exponents, and  $A_{l}$ and  $A_{n}$ are the path loss constants at $\sqrt{r^2+h^2} = \SI{1}{m}$ for the \ac{LoS} and \ac{NLoS}, respectively. 

Having defined our system model, next, we will study the performance of UAV-UEs for two scenarios: static and mobile UAV-UEs. For each scenario, we investigate the  UAV-UE coverage probability under two association schemes, namely, nearest and \ac{HARP} associations. The coverage probability is defined as the probability that the received \ac{SIR} is higher than a target threshold $\vartheta$.

 % Finally, since the \ac{LoS} probability decreases with the distance from (\ref{prob-los}), the \ac{LoS} assumption becomes less practical for far but interfering \acp{BS}.
%As discussed earlier, UAV-UEs exhibit a \ac{LoS} component which becomes dominant at relatively high altitudes. The \ac{LoS} probability in (\ref{prob-los}) represents a delta function that goes from one to zero as $r_i$ increases. This implies that the probability of \ac{LoS} communication from close \acp{BS} is higher than that of remote \acp{BS}. Hence, we consider that the desired signal is dominated by its \ac{LoS} component where $v=l$, $m_v=m_l$, and $P_v(r_i) =\sqrt{P_t}  \zeta_l(r_i)^{0.5}$. However, for the interfering signal, both \ac{LoS} and \ac{NLoS} components exist and, thus, we have: $P_v(u_j) =\sqrt{P_t}  \zeta_v(u_j)^{0.5}$, $v\in\{l,n\}$. This is due to the fact that, as the \ac{LoS} probability decreases with the interfering distance $u_j$, the \ac{LoS} assumption becomes less practical for far but interfering \acp{BS}. 

\vspace{-0.1 cm}
\section{Coverage Probability of Static UAV-UEs}
We assume static UAV-UEs hovering at a fixed altitude $h_d$, hence, we have $h=h_d-h_{\rm BS}$ in (\ref{prob-los}). Given that a \ac{PPP} is translation invariant with respect to the origin, we conduct the coverage analysis for a UAV-UE located at the origin in $\R^2$, called the typical UAV-UE  \cite{haenggi2012stochastic}. 
\vspace{-0.2 cm}
\subsection{Nearest Association}
\label{sec:near-assoc}
\vspace{-0.1 cm}

%\begin{assumption}
%\blue{\rm 
To simplify the analysis, we only consider probabilistic \ac{LoS}/\ac{NLoS} links for interfering BSs, while the serving BS has a dominant \ac{LoS} component. This is because, at high altitude, UAV-UEs will have an LoS-dominated channel toward nearby, serving BSs. However, UAV-UE interference at far-away BSs will not be LoS dominated. Moreover, the study of \ac{LoS}-based association is considered in prior works, e.g., \cite{8713514}, while we are particularly focused on the impact of practical antennas.  
% particularly focused on the antenna-based cell association
% and handover  
%\end{assumption}

We denote the horizontal distance from the typical UAV-UE to its ground \ac{BS} by $r_0$. By the \ac{PPP} assumption, $f_{R_0}(r_0)=2\pi r_0 \lambda e^{-\pi\lambda r_0^2}$. Conditioning on $R_0=r_0$, and neglecting the noise, the received $\sir$ at the typical UAV-UE will be:  
\begin{align}
\label{sir-pwr}
\Upsilon_{|r_0} &= \frac{\chi_0 \zeta_l(r_0)  }{I},  
\end{align} 		% , since all BSs transmit with the same power $P_t$
% The conditional coverage probability given a \ac{NLoS} serving ground BS is calculated in the same way.
where $I$ is the aggregate interference, $\chi_0$ is the Nakagami-$m_l$ fading power, and $\zeta_l(r_0)$ represents the antenna gain plus path loss from the serving BS. The serving and interfering signals in (\ref{sir-pwr}) are normalized to the transmit power $P_t$. The UAV-UE coverage probability is characterized in the next Theorem.
\begin{theorem}
\label{lem-stat-near}
The static UAV-UE coverage probability under nearest association is given by:  
\begin{align} 
\label{cov-prob-near12}
\mathbb{P}_{c} = \int_{r_0=0}^{\infty} \mathbb{P}_{c|r_0} f_{R_0}(r_0) \dd{r_0}, 
\end{align} 
where $\mathbb{P}_{c|r_0}$ is the UAV-UE conditional coverage probability: 	% given by
\begin{align}
\mathbb{P}_{c|r_0} & \overset{}{=}   \sum_{i=0}^{m_l-1} \frac{(-\varpi_l)^i}{i!}  \sum_{i_n+i_l=i}
\frac{i!}{i_n!i_l!}
 \frac{\varpi_l^{i_n}}{i_n!}
\frac{\partial^{i_n}}{\partial\varpi_l^{i_n}} \Lc_{I_n}(\varpi_l)  
\nonumber \\
\label{los-nlos-LT0}
& \quad \quad\quad \quad\quad \quad\quad \quad  \quad\quad \quad
\times \frac{\varpi_l^{i_l}}{i_l!}
\frac{\partial^{i_l}}{\partial\varpi_l^{i_l}} \Lc_{I_l}(\varpi_l)  ,
\end{align}
and $\varpi_l=\frac{\vartheta d_0^{\alpha_l}m_l}{A_lG(r_0,h)}$, $d_0=\sqrt{h^2+r_0^2}$; $\Lc_{I_l}$ and $\Lc_{I_n}$ are the Laplace transforms of the \ac{LoS} and \ac{NLoS} interference, respectively. The Laplace transform of \ac{LoS} interference is then given by  
$\Lc_{I_l}(\varpi_l)  = e^{- \gamma(\varpi_l)}$, where $\gamma(\varpi_l)=2\pi\lambda_b \times$ 
\begin{align}
\sum_{j=\floor{r_0\sqrt{a\eta}}}^{\infty} \Pb_{l}(\frac{j}{\sqrt{a\eta}})  
\int_{{\rm max}(r_0,\frac{j}{\sqrt{a\eta}})}^{\frac{j+1}{\sqrt{a\eta}}}  		% {\rm max}(r_0,
  \Big(1 - \big( \frac{m_l}{m_l + \varpi_l \zeta_l(r)} \big)^{m_l}  \Big) r \dd{r},  
 \nonumber  
  \end{align}
and the Laplace transform of \ac{NLoS} interference is calculated following in the same manner. 
\end{theorem}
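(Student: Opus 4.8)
The plan is to compute the coverage probability $\Pb_c = \Eb_{r_0}[\Pb_{c|r_0}]$ by first deriving the conditional coverage probability given the serving distance $R_0 = r_0$ and then averaging over $f_{R_0}(r_0) = 2\pi\lambda_b r_0 e^{-\pi\lambda_b r_0^2}$, which is why \Eq{cov-prob-near12} already has the outer integral. So the real work is establishing \Eq{los-nlos-LT0}. Starting from the conditional coverage event $\Pb(\Upsilon_{|r_0} > \vartheta) = \Pb\big(\chi_0 > \vartheta I / \zeta_l(r_0)\big)$, I would exploit that $\chi_0 \sim \Gamma(m_l, 1/m_l)$ with integer $m_l$. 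For an integer-shape Gamma random variable, the CCDF is a finite sum $\Pb(\chi_0 > x) = e^{-m_l x}\sum_{i=0}^{m_l-1} \frac{(m_l x)^i}{i!}$, so conditioning on $I$ and taking the expectation gives a tight expression rather than a bound. Substituting $x = \vartheta I/\zeta_l(r_0)$ and writing $\varpi_l = \vartheta m_l/\zeta_l(r_0) = \frac{\vartheta d_0^{\alpha_l} m_l}{A_l G(r_0,h)}$, the coverage probability becomes $\Eb_I\big[\sum_{i=0}^{m_l-1} \frac{(\varpi_l I)^i}{i!} e^{-\varpi_l I}\big]$. The standard trick is to recognize $s^i e^{-sI}$ as a derivative in the Laplace variable: $I^i e^{-\varpi_l I} = (-1)^i \frac{\partial^i}{\partial \varpi_l^i} e^{-\varpi_l I}$, so that $\Eb_I[\,\cdot\,]$ turns into derivatives of the Laplace transform $\Lc_I(\varpi_l) = \Eb[e^{-\varpi_l I}]$ evaluated at $\varpi_l$.

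Next I would handle the fact that the aggregate interference $I$ splits into independent \ac{LoS} and \ac{NLoS} contributions, $I = I_l + I_n$, so $\Lc_I(\varpi_l) = \Lc_{I_l}(\varpi_l)\Lc_{I_n}(\varpi_l)$. The $i$-th derivative of a product of two functions is given by the Leibniz rule, $\frac{\partial^i}{\partial \varpi_l^i}\big(\Lc_{I_l}\Lc_{I_n}\big) = \sum_{i_l + i_n = i} \frac{i!}{i_l! i_n!} \frac{\partial^{i_l}}{\partial \varpi_l^{i_l}}\Lc_{I_l}\, \frac{\partial^{i_n}}{\partial \varpi_l^{i_n}}\Lc_{I_n}$, which reproduces exactly the inner multinomial sum in \Eq{los-nlos-LT0} with the $\frac{(-\varpi_l)^i}{i!}$ prefactor absorbing the $(-1)^i$ and the $\varpi_l^i/i!$. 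This is the combinatorial step that accounts for the Gamma small-scale fading on the desired link.

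The remaining task is to derive $\Lc_{I_l}(\varpi_l) = e^{-\gamma(\varpi_l)}$ explicitly. I would apply the \ac{PGFL} of the \ac{PPP} $\Phi_b$ to the \ac{LoS} interferers: each interferer at horizontal distance $r$ contributes $\chi \zeta_l(r)$ with $\chi \sim \Gamma(m_l, 1/m_l)$, and averaging $\Eb_\chi[e^{-\varpi_l \chi \zeta_l(r)}]$ over the Gamma fading yields the factor $\big(\frac{m_l}{m_l + \varpi_l \zeta_l(r)}\big)^{m_l}$. The \ac{PGFL} then produces $\exp\big(-2\pi\lambda_b \int (1 - (\frac{m_l}{m_l + \varpi_l\zeta_l(r)})^{m_l}) r\,\dd r\big)$, and the integral must be weighted by the \ac{LoS} probability $\Pb_l(r)$ and restricted to $r \ge r_0$ (nearest-association exclusion: no interferer is closer than the serving BS). The subtlety here, and what I expect to be the main obstacle, is justifying the discretization of the radial integral into the sum over $j$ in $\gamma(\varpi_l)$: the \ac{LoS} probability $\Pb_l(r)$ in \Eq{prob-los} depends on $r$ only through $o = \floor{r\sqrt{a\eta}}$, so $\Pb_l$ is piecewise constant on intervals $[\frac{j}{\sqrt{a\eta}}, \frac{j+1}{\sqrt{a\eta}})$; I would partition $[r_0,\infty)$ into these intervals, pull the constant $\Pb_l(\frac{j}{\sqrt{a\eta}})$ out of each piece, and set the lower limit of the $j$-th integral to $\max(r_0, \frac{j}{\sqrt{a\eta}})$ so that only $j \ge \floor{r_0\sqrt{a\eta}}$ appear, matching the stated sum. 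The \ac{NLoS} Laplace transform follows identically, replacing $\Pb_l(r)$ by $1-\Pb_l(r)$, $\zeta_l$ by $\zeta_n$, and $m_l$ by $m_n$; verifying that the independence of $I_l$ and $I_n$ (which underpins the product factorization and hence the Leibniz step) holds under the thinning of $\Phi_b$ into \ac{LoS} and \ac{NLoS} point sets is the one technical point I would state carefully.
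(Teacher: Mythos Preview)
Your proposal is correct and follows essentially the same route as the paper: Gamma CCDF expansion for integer $m_l$, the derivative trick $I^i e^{-\varpi_l I}=(-1)^i\partial^i_{\varpi_l}e^{-\varpi_l I}$ to pass to $\Lc_I$, factorization $\Lc_I=\Lc_{I_l}\Lc_{I_n}$ via independent LoS/NLoS thinning followed by Leibniz, then PGFL with Nakagami averaging and the piecewise-constant discretization of $\Pb_l(r)$ over the intervals $[j/\sqrt{a\eta},(j+1)/\sqrt{a\eta})$. The only difference is cosmetic: the paper simply cites prior work for the LoS/NLoS independence, whereas you flag it as a point to justify explicitly.
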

\begin{proof}
The sketch of the proof is found in the Appendix. 
\end{proof}
% and $\Pb_{l}(j)$ is calculated from (\ref{prob-los}).  % 
% with $o=\max(0,\floor{j\sqrt{a\eta}-1})$    \frac{j}{1000} $o=\floor{j-1}$

Since it is hard to directly obtain insights from (\ref{cov-prob-near12}) on the effect of the practical antenna gain,  several results based on (\ref{cov-prob-near12}) will be shown in Section \ref{num-result} to provide key design guidelines. Moreover, we show that the coverage probability does not scale with $N_t$, which implies that increasing the number of antenna elements has a marginal effect on the coverage probability. 
% key practical insights		key practical
\begin{corollary}
\label{eff-of-Nt}
The coverage probability of static UAV-UEs does not scale asymptotically with the number of antenna elements.
%$N_t$
\end{corollary}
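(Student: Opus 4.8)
The plan is to show that the antenna count $N_t$ enters the coverage expression (\ref{cov-prob-near12})--(\ref{los-nlos-LT0}) only through a factor that cancels in the SIR ratio, leaving at most a bounded, rapidly oscillating residual whose integrated effect converges as $N_t\to\infty$. First I would isolate the $N_t$-dependence of the array gain in (\ref{antenna-gain}) by writing $G(r,h)=\frac{1}{N_t}B_{N_t}(r)$, where $B_{N_t}(r)=\sin^2\!\big(\frac{N_t\pi}{2}s(r)\big)/\sin^2\!\big(\frac{\pi}{2}s(r)\big)$ and $s(r)=\sin(\arctan(h/r))=h/\sqrt{h^2+r^2}$. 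The key observation is that this common normalization $1/N_t$ is shared by the serving link and every interfering link, since $\zeta_v(r)=A_vG(r,h)(r^2+h^2)^{-\alpha_v/2}$ uses the same $G$ for all BSs.

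Next I would substitute this factorization into the SIR $\Upsilon_{|r_0}=\chi_0\zeta_l(r_0)/I$ in (\ref{sir-pwr}). Because $\Upsilon$ is a ratio of a serving power to an aggregate interference power, the prefactor $1/N_t$ in numerator and denominator cancels exactly, so $\Upsilon$ depends on $N_t$ only through the bounded factors $B_{N_t}(r_j)$. The same cancellation appears in the analytic formula: in the Laplace-transform integrand the product $\varpi_l\zeta_l(r)$ simplifies to $\vartheta m_l d_0^{\alpha_l}(r^2+h^2)^{-\alpha_l/2}\,B_{N_t}(r)/B_{N_t}(r_0)$, which carries no $1/N_t$ prefactor. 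This already rules out any polynomial scaling of $\mathbb{P}_c$ in $N_t$ and reduces the claim to controlling the residual oscillatory dependence.

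Finally I would argue that the residual dependence through $B_{N_t}$ averages out as $N_t\to\infty$. In $\gamma(\varpi_l)$ the interference is integrated over $r$ against the smooth weight $r\,\mathbb{P}_l(\cdot)\dd{r}$ and summed over the PPP, so the rapidly oscillating $\sin^2(\frac{N_t\pi}{2}s(r))$ equidistributes and, by a Riemann--Lebesgue / Weyl-equidistribution argument, may be replaced by its mean value $\tfrac12$ in the limit; the same averaging applies to the outer integral over $r_0$ against the smooth density $f_{R_0}$ in (\ref{cov-prob-near12}). The limiting coverage probability is then a fixed constant, independent of $N_t$, which establishes that $\mathbb{P}_c$ does not scale asymptotically with $N_t$.

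The hard part will be this last step, specifically the coupling introduced by $B_{N_t}(r_0)$ sitting in the denominator of the interference integrand: the serving-link oscillation cannot be averaged independently of the interference, so one must condition on $R_0=r_0$, average the interference oscillations first (legitimate because the PPP supplies infinitely many interferers and the shot-noise self-averages), and only then average over $r_0$. Making the exchange of limit and integration rigorous requires a dominating bound uniform in $N_t$ and care near the array nulls, where $B_{N_t}(r_0)$ can be arbitrarily small and $\varpi_l$ correspondingly large; controlling this near-null contribution, for instance via the monotone saturation $1-(m_l/(m_l+x))^{m_l}\le 1$, is what makes the averaging argument delicate rather than routine.
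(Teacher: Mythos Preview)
Your approach shares with the paper the crucial observation that the common $1/N_t$ prefactor in $G(r,h)$ cancels in the SIR ratio, but from there the two arguments diverge sharply. The paper does \emph{not} pass to large $N_t$ and invoke equidistribution; it works in the opposite regime, assuming $N_t$ small and $h/r\to 0$ (sparse deployment), together with $m_v=A_v=1$. Under those assumptions the small-angle approximations $\arctan(h/r)\simeq 4h/(\pi r)$ and $\sin x\simeq x$ collapse the array factor to $G(r,h)\simeq N_t$ \emph{independently of $r$}, so the serving and interfering gains are equal constants and cancel exactly; combined with the elementary bound $e^{-x}\ge 1-x$, this yields a closed-form lower bound on $\mathbb{P}_{c|r_0}$ in which $N_t$ has disappeared. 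The paper's proof is thus a one-line algebraic cancellation after heavy simplifying assumptions, whereas yours is an analytic averaging argument in the $N_t\to\infty$ limit.

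Your route is closer in spirit to the word ``asymptotically'' and, if completed, would give a stronger statement than the paper's. Two cautions, though. First, your claim that the factors $B_{N_t}(r)$ are ``bounded'' is not uniform: $B_{N_t}(r)\to N_t^2$ as $r\to\infty$ (equivalently $s(r)\to 0$), and the $N_t$-free envelope $B_{N_t}(r)\le 1/\sin^2(\pi s(r)/2)$ holds only away from $s=0$, so the tail of the interference integral needs a separate argument via the path-loss decay. Second, the near-null difficulty you correctly flag is genuinely delicate: as $N_t$ grows there are $O(N_t)$ nulls of $B_{N_t}(r_0)$, each of width $O(1/N_t)$ in $r_0$, so the set where the serving gain nearly vanishes has $r_0$-measure of order one, not $o(1)$; the saturation bound you propose controls the interference integrand but does not by itself show that the outer $r_0$-average is insensitive to these nulls. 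The paper's small-$N_t$, small-angle regime sidesteps both issues entirely, at the price of a much narrower (and arguably less ``asymptotic'') conclusion.
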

\begin{proof} To study scalability, we consider a simple case with $m_v=A_v=1$, $v\in\{l,n\}$, hence, $\varpi_l=\frac{\vartheta d_0^{\alpha_l}}{G(r_0,h)}$. We also assume that $N_t$ is small and $\frac{h}{r}\to 0$, which is a reasonable assumption for sparsely-deployed networks. The conditional coverage probability $ \mathbb{P}_{c|r_0}$ is then simplified to:
\begin{align} 
&\mathbb{P}_{c|r_0}=  e^{-2\pi\lambda_b \int_{r=r_0}^{\infty} 
  \frac{\varpi_l \zeta_v(r) r}{1 + \varpi_l \zeta_v(r) } \dd{r}}   
  \nonumber \\
&
\overset{(a)}{\geq} 1 - 2\pi\lambda_b  \int_{r_0}^{\infty}   
  \frac{ \frac{\vartheta d_0^{\alpha_l} G(r,h) d^{-\alpha_v} }{ G(r_0,h) }}{1 + \frac{\vartheta d_0^{\alpha_l} G(r,h)  d^{-\alpha_v}}{ G (r_0,h) } } r \dd{r}
   \nonumber \\
&\overset{}{=} 1 - 2\pi\lambda_b  \int_{r_0}^{\infty}   
  \frac{ \vartheta d_0^{\alpha_l} G(r,h) d^{-\alpha_v} }{G(r_0,h) + \vartheta d_0^{\alpha_l} G(r,h)  d^{-\alpha_v}  } r \dd{r}, 
 % \nonumber  
 \end{align}
 where (a) follows from $e^{-x}\geq 1- x$. Recall that $G(r,h)=\frac{1}{N_t}\frac{{\rm sin}^2\frac{N_t\pi}{2}({\rm sin}({\rm arctan}(\frac{h}{r})))}{{\rm sin}^2\frac{\pi}{2}({\rm sin}({\rm arctan}(\frac{h}{r})))}$.  
 From \cite{rajan2006efficient}, we have ${\rm arctan}(\frac{h}{r}) \simeq \frac{4h}{\pi r}$ and ${\rm sin}(\frac{4h}{\pi r}) \simeq \frac{4h}{\pi r}$ for $h\ll r$, since ${\rm sin}(x) \simeq x$ for small $x$. Also, ${\rm sin}^2\frac{N_t\pi}{2}(\frac{4h}{\pi r}) \simeq (\frac{2N_th}{ r})^2$ for small $N_t$ and $h\ll r$.
\begin{align} 
  \label{asypmtot}
 \mathbb{P}_{c|r_0} &\overset{}{\simeq} 
 1 - 2\pi\lambda_b  \int_{r_0}^{\infty}   
  \frac{ \vartheta (d_0^{\alpha_l}/d^{\alpha_v}) N_t }{N_t + \vartheta (d_0^{\alpha_l}/d^{\alpha_v}) N_t } r \dd{r}
  \nonumber \\  
  &\overset{}{=} 
  1 - 2\pi\lambda_b  \int_{r_0}^{\infty}   
  \frac{ \vartheta (d_0^{\alpha_l}/d^{\alpha_v})  }{1 + \vartheta (d_0^{\alpha_l}/d^{\alpha_v})  } r \dd{r}.   
 \end{align}
 Since $\mathbb{P}_{c|r_0}$ in (\ref{asypmtot}) is not a function of $N_t$, the UAV-UE coverage probability does not scale asymptotically with $N_t$, and the effect of $N_t$ on the coverage probability is minor. 
 % the coverage probability
\end{proof}

% has a marginal effect on
%&  \int_{r=r_0}^{\infty}   
%  \frac{ \frac{\vartheta 
%  {\rm sin}^2\frac{N_t\pi}{2}({\rm sin}({\rm arctan}(\frac{h}{r})))
%   {\rm sin}^2\frac{\pi}{2}({\rm sin}({\rm arctan}(\frac{h}{r_0})))
%   }{ 
%   {\rm sin}^2\frac{\pi}{2}({\rm sin}({\rm arctan}(\frac{h}{r})))
% {\rm sin}^2\frac{N_t\pi}{2}({\rm sin}({\rm arctan}(\frac{h}{r_0})))
%     } (\frac{d_0}{d})^{\alpha_n}}{1 + 
%    \frac{\vartheta 
%  {\rm sin}^2\frac{N_t\pi}{2}({\rm sin}({\rm arctan}(\frac{h}{r})))
%   {\rm sin}^2\frac{\pi}{2}({\rm sin}({\rm arctan}(\frac{h}{r_0})))
%    }{ 
%   {\rm sin}^2\frac{\pi}{2}({\rm sin}({\rm arctan}(\frac{h}{r})))
% {\rm sin}^2\frac{N_t\pi}{2}({\rm sin}({\rm arctan}(\frac{h}{r_0})))
%     } (\frac{d_0}{d})^{\alpha_n}
%    } r \dd{r}

%
% \mathbb{P}_{c} &=
% =2\pi\lambda_b   \int_{r_0=0}^{\infty}  r_0 e^{-2\pi\lambda_b \int_{r=r_0}^{\infty} 
%  \frac{\varpi_n \zeta_n(r) r}{1 + \varpi_n \zeta_n(r) } \dd{r}}   
%e^{-\pi\lambda_br_0^2} \dd{r_0}
% 
% Next, we derive the coverage probability of a static UAV-UE under \ac{HARP} association.
\vspace{-0.3 cm}
\subsection{Highest Average Received Power Association}
\label{sec:harp-assoc}
\vspace{-0.1 cm}
Each UAV-UE is associated to the BS that delivers the highest average power, i.e., the effect of small scale fading is averaged. Hence, the received signal power depends on the path loss $d^{-\alpha_v}$ and antenna gain $G(r,h)$, which is, in turn, determined by the elevation angle $\theta$. In Fig.~\ref{fadhil}, we plot the path loss, antenna gain, and the overall link gain, i.e., the path loss times antenna gain, versus the horizontal distance $r$. From (\ref{antenna-gain}), the antenna gain consists of $\frac{N_t}{2}$ lobes whose peaks increase with the horizontal distance $r$. Fig.~\ref{fadhil} also shows that the overall link gain (dashed line) consists of $\frac{N_t}{2}$ lobes whose maximum gains decrease as $r$ increases.

%Particularly this overall gain represents the large-scale fading $d^{-\alpha_v}$ times the antenna gain $G(r,h)$. Fig.~\ref{harp-assoc-r} shows that the antenna gain (red line) $G(r,h)$ consists of $\frac{N_t}{2}=\frac{8}{2}=4$ consecutive lobes, where the maximum \blue{seen} gain of each lobe monotonically increases with $r$ (i.e., decreases with $\theta$). Particularly, the maximum antenna gain is seen at $r=\infty$ ($\theta=0^\circ$) and the lowest is at $r=0$ ($\theta=\frac{\pi}{2}$, i.e., the UAV-UE us directly above the BS location). for \ac{HARP} association
% along with $r$	average power 
% structure 
%  Note that the idea behind the SCA method is
% We assume the peak of the lobe divides it into right and left sub-lobes. Assuming that there are BSs in an arbitrary lobe. If each lobe is seen as two sub-lobes, the BS that deliver the highest average power is considered the nearest BS far than the peak of the lobe, or the farthest BS before this peak. Since it is difficult to characterize all these cases, we consider the right half of the lobe with doubled density The idea behind this approximation is 

 It is worth highlighting that obtaining the serving distance \ac{PDF} is not possible given the non-convexity of the locations of BSs that deliver the highest power along $r$ (see Fig.~\ref{fadhil}). We hence propose a \emph{novel geometry-based approximation} that leverages the relative symmetrical characteristics of the overall link gain. The key idea behind this approximation is to reproduce an equivalent \ac{PPP} deployment of the BS locations in which the distance \ac{PDF} to the BS delivering the highest average power (within each lobe) can be obtained. With this in mind, we can then obtain an expression approximating the UAV-UE coverage probability under \ac{HARP} association. We particularly divide the space $r>0$ into $\frac{N_t}{2}$ regions, each of which corresponds to the boundary of one lobe. Then, for each lobe, we consider the zone from its peak to the next null, assuming doubled density of BSs. For instance, in Fig.~\ref{fadhil}, BSs of density $\lambda=2\lambda_b$ only exists within $r_{pj}$ to $r_{nj}$, $j\in\{1,\dots,\frac{N_t}{2}\}$. UAV-UEs then associate to the nearest BS from the reproduced deployment, \black{which delivers the highest average power within its lobe. However, since this BS might not be the one that delivers the highest average power among all the BSs, the adopted method is an approximation.} 
 % hence, the adopted method is an approximation.
 %
 % The underlying reason for applying this approximation is to exploit the symmetry of the link gain with $r$ to avoid calculating the exact serving distance \ac{PDF}.  It is unfortunate that \dots

From (\ref{antenna-gain}), there are $\frac{N_t}{2}$ nulls between the lobes that occur at elevation angles $\theta_{nj}$, where $\frac{N\pi}{2}{\rm sin}(\theta_{nj}) = j\pi$. Hence, $\theta_{nj}= {\rm arcsin}(\frac{2j}{N})$ and the equivalent horizontal distances at these nulls will be $r_{nj} = h/{\rm tan}\big({\rm arcsin}(\frac{2j}{N})\big)$. For instance, $\theta_{n0}=0^{\circ}$ and $r_{n4}=\frac{h}{{\rm tan}(0)} = \infty$ (where the indices are reversed for the distance). To obtain the locations of peaks $r_{pj}$, we define the overall link gain as $L(\theta)=G(\theta)\times d^{-\alpha_v}$:  
\begin{align}		% G(\theta) \big(r^2 + h^2\big)^{-\alpha_v/2}
&L(\theta) = \frac{1}{N_t} \frac{{\rm sin}^2\frac{N_t\pi}{2}\big({\rm sin}(\theta)\big)}{{\rm sin}^2\frac{\pi}{2}\big({\rm sin}(\theta)\big)} \times \big(r^2 + h^2\big)^{-\alpha_v/2}
\nonumber \\
&\overset{(a)}{=} \frac{1}{N_t h^{\alpha_v} } \frac{{\rm sin}^2\frac{N_t\pi}{2}\big({\rm sin}(\theta)\big)}{{\rm sin}^2\frac{\pi}{2}\big({\rm sin}(\theta)\big) \big({\rm tan}^{-2}(\theta) + 1\big)^{\alpha_v/2}},
\nonumber
\end{align}
where (a) follows from ${\rm tan}(\theta)=\frac{h}{r}$. We find elevation angles $\theta_{pj}$ at which the peaks occur by taking the first derivative of $L(\theta)$ and set it to zero. The first derivative is obtained as: 
\begin{align}
\frac{\partial L(\theta)}{\partial \theta} &= (N_t-1){\rm sin}(\frac{\pi(N_t+1)}{2}{\rm sin}(\theta)) +
\nonumber \\
&(N_t+1){\rm sin}(\frac{\pi(1-N_t)}{2}{\rm sin}(\theta)) + 
\nonumber \\
& \frac{\alpha_v\big({\rm cos}(\frac{\pi(N_t-1)}{2}{\rm sin}(\theta)) - {\rm cos}(\frac{\pi(N_t+1)}{2}{\rm sin}(\theta)\big)}{\pi{\rm cos}(\theta){\rm tan}(\theta)
\big(1+{\rm tan}^2(\theta)\big)}=0.
\nonumber 
 \end{align}

 \begin{figure} [!tb]	%[!t] %%    [htbp]
  \vspace{-0.3 cm}
\centering
\includegraphics[width=0.75 \linewidth]{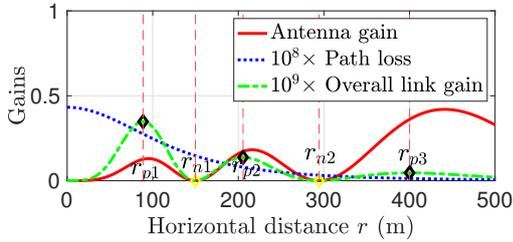}		%letter	roro2		
\caption {Illustration of the geometry-based approximation to calculate the UAV-UE coverage probability under \ac{HARP} association.}		% \magenta{($v=n$)}   % $N_t=8$, 
\label{fadhil}
 \vspace{-0.5 cm}
\end{figure}

The roots of this equation are $\theta_{pj}$, and equivalent distances  are $r_{pj} =\frac{h}{{\rm tan}(\theta_{pj})}$. The density of BSs within [$r_{jp},r_{jn}]$ is $\lambda=2\lambda_b$, and zero otherwise. From the \ac{PPP} definition, the probability that at least one BS exists within  $[r_{pj},r_{nj}]$ in the reproduced deployment is $1 - e^{-\pi\lambda(r_{nj}^2-r_{pj}^2)}$. Hence, the piece-wise serving distance \ac{CDF} will be $F_{R_m}(r_m)=$ 
\[
    \left\{
                \begin{array}{ll}
                  \Big(1 - e^{-\pi\lambda(r_{m}^2-r_{p1}^2)}\Big)\Big(1 - e^{-\pi\lambda(r_{n1}^2-r_{p1}^2)}\Big), 
                  \quad\quad j=1\\ 
                 \Big(1-e^{-\pi\lambda(r_{m}^2-r_{pj}^2)}\Big) \Big(1-e^{-\pi\lambda(r_{nj}^2-r_{pj}^2)}\Big)\times \\
                 \prod_{i=1}^{j-1} e^{-\pi\lambda(r_{ni}^2-r_{pi}^2)}, 
                 \quad\quad\quad\quad\quad\quad\quad\quad\quad
                 1<j\leq \frac{N_t}{2}
                 \\
                     0,\quad\quad\quad\quad\quad\quad
                     \quad\quad\quad\quad\quad\quad\quad\quad\quad\quad\quad{\rm otherwise}
                \end{array}   
              \right.
  \]
where the product term represents the probability that no BSs exist in the previous lobes. Hence, the serving distance \ac{PDF} will be $f_{R_m}(r_m)=\frac{\partial F_{R_m}(r_m)}{\partial r_m}=$ 
\[
    \left\{
                \begin{array}{ll}
              2\pi\lambda r_m e^{-\pi\lambda(r_{m}^2-r_{p1}^2)}\Big(1 - e^{-\pi\lambda(r_{n1}^2-r_{p1}^2)}\Big), 
                  \quad\quad j=1\\ 
             2\pi\lambda r_m e^{-\pi\lambda(r_{m}^2-r_{pj}^2)} \Big(1-e^{-\pi\lambda(r_{nj}^2-r_{pj}^2)}\Big)\times \\
                 \prod_{i=1}^{j-1} e^{-\pi\lambda(r_{ni}^2-r_{pi}^2)}, 
                 \quad\quad\quad\quad\quad\quad\quad\quad\quad
                 1<j\leq \frac{N_t}{2}
                 \\
                     0,\quad\quad\quad\quad\quad\quad
                     \quad\quad\quad\quad\quad\quad\quad\quad\quad\quad\quad{\rm otherwise.}
                \end{array}   
              \right.
  \]
It can be easily verified that $\int_{r_m=0}^{\infty}f_{R_m}(r_m)=1$. Given $f_{R_m}(r_m)$, the approximated coverage probability under \ac{HARP} association is  characterized in the next corollary. 			
% \red{From (\ref{cov-prob-near}) of Theorem \ref{lem-stat-near}}
% the serving distance \ac{PDF}
%\end{corollary}
%\magenta{The proof of Corollary \ref{geom-approx} proceeds in a similar way to Theorem \ref{lem-stat-near}, with the details omitted for brevity}. 
%
% $\varpi_l=\frac{\vartheta w_0^{\alpha_l}m_l}{A_lG(w_0,h)}$, $G=\frac{1}{N_t}\frac{{\rm sin}^2\frac{N_t\pi}{2}({\rm sin}({\rm arcsin}(\frac{h}{w_0})))}{{\rm sin}^2\frac{\pi}{2}({\rm sin}( {\rm arcsin}(\frac{h}{w_0}) ))}$, \blue{$h=h_d-h_{\rm BS}$}, and $\mathbb{P}_{c|w_0}^{n}$ is calculated as in (\ref{los-nlos-LT0}) with the Laplace transform of interference given by $\Lc_{I_l}(\varpi_l)  = e^{- \gamma(\varpi_l)}$, 
%where $\gamma(\varpi_l)= 2\pi\lambda_b \sum_{j=\floor{r_0\sqrt{a\eta}}}^{\infty} \Pb_{l}(\varsigma)  
%\int_{\frac{j}{\sqrt{a\eta}}}^{\frac{j+1}{\sqrt{a\eta}}}  		% {\rm max}(r_0,
%  \Big(1 - \big( \frac{m_l}{m_l + \varpi_l \zeta_l(r)} \big)^{m_l}  \Big) r \dd{r}$.   
%

\begin{corollary}
\label{geom-approx}
The coverage probability of static UAV-UEs under \ac{HARP} association is approximated as: 
\begin{align} 
\label{cov-prob-near}
\mathbb{P}_{c} = \int_{r_m=0}^{\infty}  \mathbb{P}_{c|r_m} f_{R_m}(r_m) \dd{r_m},  
\end{align} 
where $\mathbb{P}_{c|r_m}$ is calculated similar to $\mathbb{P}_{c|r_0}$ in Theorem \ref{lem-stat-near},  with $\varpi_l=\frac{\vartheta d_m^{\alpha_l}m_l}{A_lG(r_m,h)}$, $d_m=\sqrt{h^2+r_m^2}$, and  
$\gamma(\varpi_l)=2\pi\lambda_b \times \sum_{j=0}^{\infty} \Pb_{l}(\frac{j}{\sqrt{a\eta}})  
\int_{\frac{j}{\sqrt{a\eta}}}^{\frac{j+1}{\sqrt{a\eta}}}  		
  \Big(1 - \big( \frac{m_l}{m_l + \varpi_l \zeta_l(r)} \big)^{m_l}  \Big) r \dd{r}$. 
\end{corollary}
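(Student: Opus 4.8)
The plan is to follow the derivation of Theorem~\ref{lem-stat-near} almost verbatim, since under \ac{HARP} association the conditional \ac{SIR} keeps the same form $\Upsilon_{|r_m} = \chi_0\,\zeta_l(r_m)/I$; only the serving-distance law and the support of the interference integral change. First I would condition on the serving horizontal distance $R_m = r_m$ and write $\Pb_c = \int_0^{\infty}\Pb_{c|r_m}\,f_{R_m}(r_m)\,\dd{r_m}$, where $f_{R_m}$ is the piece-wise serving-distance \ac{PDF} obtained from the reproduced-\ac{PPP} construction established just before the statement (doubled density $\lambda = 2\lambda_b$ on each lobe from its peak $r_{pj}$ to the next null $r_{nj}$). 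This de-conditioning is immediate once $\Pb_{c|r_m}$ is available.

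Next I would evaluate $\Pb_{c|r_m} = \Pb\!\big(\chi_0 > \vartheta I/\zeta_l(r_m)\mid r_m\big)$. Since $\chi_0\sim\Gamma(m_l,1/m_l)$ with integer $m_l$, its \ac{CCDF} is the finite sum $\Pb(\chi_0 > x) = e^{-m_l x}\sum_{i=0}^{m_l-1}(m_l x)^i/i!$, giving $\Pb_{c|r_m} = \Eb_I\big[e^{-\varpi_l I}\sum_{i=0}^{m_l-1}(\varpi_l I)^i/i!\big]$ with $\varpi_l = \vartheta d_m^{\alpha_l} m_l/(A_l G(r_m,h))$ and $d_m = \sqrt{h^2+r_m^2}$, exactly the constants stated in the corollary. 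Splitting the aggregate interference into independent \ac{LoS} and \ac{NLoS} parts $I = I_l + I_n$ so that $\Lc_I = \Lc_{I_l}\Lc_{I_n}$, and using $\Eb_I[I^i e^{-\varpi_l I}] = (-1)^i\,\partial_{\varpi_l}^i\Lc_I(\varpi_l)$ together with the Leibniz rule for the product, reproduces the double-sum of derivatives of $\Lc_{I_l}$ and $\Lc_{I_n}$ appearing in~\eqref{los-nlos-LT0}. As these manipulations are identical to those behind Theorem~\ref{lem-stat-near}, I would simply invoke that derivation.

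The only genuinely new element is the interference Laplace transform. Applying the \ac{PGFL} of the \ac{PPP} $\Phi_b$ and the Gamma \ac{MGF}, which gives $\Eb_\chi[e^{-\varpi_l\chi\zeta_l(r)}] = (m_l/(m_l+\varpi_l\zeta_l(r)))^{m_l}$, yields $\Lc_{I_l}(\varpi_l) = \exp\big(-2\pi\lambda_b\int \Pb_l(r)\,(1-(m_l/(m_l+\varpi_l\zeta_l(r)))^{m_l})\,r\,\dd{r}\big)$, and $\Lc_{I_n}$ follows analogously. The decisive difference from the nearest-association case lies in the integration support: under nearest association no interferer is closer than $r_0$, so the integral begins at $\max(r_0,j/\sqrt{a\eta})$ with the outer sum starting at $j = \floor{r_0\sqrt{a\eta}}$, whereas under \ac{HARP} the serving \ac{BS} need not be the closest and no exclusion disk is removed, so interference is accumulated over the whole plane: the sum starts at $j=0$ and the inner integral at $j/\sqrt{a\eta}$. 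Exploiting that $\Pb_l(r)$ depends on $r$ only through $o=\floor{r\sqrt{a\eta}}$ and is therefore constant on each annulus $[j/\sqrt{a\eta},(j+1)/\sqrt{a\eta})$, I can pull it out as $\Pb_l(j/\sqrt{a\eta})$ and obtain the stated $\gamma(\varpi_l)$.

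The main obstacle here is conceptual rather than computational. The non-monotone, multi-lobe overall link gain $L(\theta)$ rules out a closed-form serving-distance \ac{PDF}, and exact \ac{HARP} association would couple the serving \ac{BS}'s superiority to an intractable power-ordering constraint on every interferer. Justifying that the serving distance may instead be drawn from the lobe-wise doubled-density construction, and that the residual interference may be treated as the full \ac{PPP} field with no exclusion region, is exactly the geometry-based approximation that renders the result tractable; this is why the corollary is phrased as an approximation rather than an identity, and it is the step I would scrutinize most carefully.
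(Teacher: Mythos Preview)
Your proposal is correct and follows exactly the approach the paper takes: the paper's own justification is the single sentence ``The proof of Corollary~\ref{geom-approx} follows from Theorem~\ref{lem-stat-near},'' and you have unpacked precisely that, reproducing the Gamma-\ac{CCDF}/\ac{PGFL} derivation of Theorem~\ref{lem-stat-near} with the two necessary changes---replacing $f_{R_0}$ by the lobe-wise approximated $f_{R_m}$, and extending the interference integral to the full plane (sum from $j=0$, no exclusion at $r_0$) since \ac{HARP} imposes no nearest-distance ordering. Your identification of the geometry-based approximation as the only non-mechanical step is also on point.
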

% % {\rm max}(r_0,
% \floor{r_0\sqrt{a\eta}}
\black{The proof of Corollary \ref{geom-approx} follows from Theorem \ref{lem-stat-near}.} 
As discussed previously, the cell association of UAV-UEs and their overall performance  is essentially driven by the availability of \ac{LoS} links and the encountered antenna gain. Thus far, we particularly showed that the coverage probability of static UAV-UEs heavily depends on the antenna pattern, but it does not scale with the number of antenna elements. Next, we study a scenario in which the UAV-UEs can be mobile.

 \vspace{-0.1 cm}
\section{Coverage Probability of mobile UAV-UEs}
\label{1D-RWP-Cov}
\vspace{-0.1 cm}		% \cite{1233531}		see Fig.~\ref{system-model-comp0}
% This model effectively describes the motion of UAV-UEs during the take off and landing.
We consider vertically-mobile UAV-UEs that make frequent up and down movements with a fixed velocity $\bar{\nu}$ in the finite vertical region $[h_1,h_2]$. We refer to it as \emph{vertical 1D \ac{RWP}} mobility model. Similar stochastic mobility models are adopted for UAV-BSs and UAV-UEs in the recent works \cite{amer2019mobility} and \cite{8671460,banagar2019performance,banagar2019fund,banagar20193gpp}. The proposed mobility model works as follows: Initially, at time instant $t_0$, the UAV-UE is at an arbitrary altitude $h_0$ selected uniformly from the interval  $[h_1,h_2]$. Then, at next time epoch $t_1$, this UAV-UE at $h_0$ selects a new random waypoint $h_1$ uniformly in $[h_1,h_2]$, and moves towards it. Once the UAV-UE reaches $h_1$, it repeats the same procedure to find the next destination altitude and so on. Eventually, the steady-state  altitude distribution converges to a non-uniform distribution $F_{Z}(z)$. Note that random waypoints refer to the altitude of a UAV-UE at each time epoch, which is uniformly-distributed in $[h_1,h_2]$, while vertical transitions are the differences in the UAV-UE altitude throughout its trajectory. While the random waypoints are independent and uniformly-distributed by definition, the  random lengths of vertical transitions are statistically dependent. This is because the endpoint of one movement epoch is the starting point of the next epoch. In \cite{amer2019mobility}, we showed that $f_{Z}(z)=\frac{h_1 z+h_2 z -h_1 h_2 - z^2}{\hbar^3/6}$, $\forall h_1<z<h_2$, where $\hbar=h_2-h_1$. It can be easily verified that the mean altitude is $\Eb[Z]=\frac{h_1+h_2}{2}$. 

The coverage mobility of UAV-UEs is studied under nearest and \ac{HARP} associations. Vertically-mobile UAV-UEs under nearest association maintain their connection to the nearest BSs. However, since UAV-UEs under \ac{HARP} association connect to the BS delivering the highest average power, the serving BS might change with the UAV-UE altitude. This is because the elevation angle and, correspondingly, the BS antenna gain change with the UAV-UE altitude.   

%For the \ac{HARP} scenario, however, they connect to the BS delivering the highest average power. By changing the UAV-UE altitude, the elevation angle also changes and, accordingly, the serving BS might also become different.
\vspace{-0.1 cm}
 \subsection{Nearest Association}
 \vspace{-0.1 cm}
 \label{sec:near-assoc-mob}
 For vertically mobile UAV-UEs, the distance to the nearest \ac{BS} is denoted as $w_0=\sqrt{r_0^2+(z-h_{\rm BS})^2}$, with $w_0\geq h_0$, and $h_0= h_1-h_{\rm BS}$. For simplicity, we use $f_{Z}(z)$ to obtain the steady state 3D distance \ac{PDF} $f_{W_0}(w_0)$, rather than averaging over two \acp{RV} $Z$ and $R_0$ in the coverage probability calculation. Since $R_0$ and $Z$ are two independent \acp{RV}, $f_{R_0,Z}(r_0,z)=f_{R_0}(r_0) f_{Z}(z)$. Given that, we transform the two \acp{RV} $R_0$ and $Z$ to $W_0$ and find $f_{W_0}(w_0)$, with the details omitted due to the limited space. The equivalent 3D distance \ac{PDF} is obtained as $f_{W_0}(w_0)=2\pi\lambda_bw_0\Omega(h_1,h_2) e^{-\pi\lambda_b w_0^2}$, where $\Omega(h_1,h_2) = \frac{\psi(h_2) (1-\varkappa)  -\psi(h_1) (\varkappa+1)  + 2 \kappa(h_1,h_2) -  2 \kappa(h_2,h_1)}{2 \pi  \lambda_b ^{3/2} \hbar^3/3}$, $\psi(x)=\text{erfi}\left(\sqrt{\pi } \sqrt{\lambda_b } (x-h_{\rm BS})\right)$, $\varkappa=2 \pi  \lambda_b  (h_1-h_{\rm BS}) (h_{\rm BS}-h_2)$, and $\kappa(x,y)= \sqrt{\lambda_b } (x - h_{\rm BS})e^{\pi\lambda_b(y-h_{\rm BS})^2}$.

 Since $w_0$ replaces both $z$ and $r_0$, we set $\theta$ to ${\rm arcsin}(\frac{h}{w_0})$ in (\ref{antenna-gain}) to obtain the antenna gain $G(w_0,h)$, where $h=h_d-h_{\rm BS}$ and $h_d=\Eb[Z]=\frac{h_1+h_2}{2}$ is the mean flying altitude. The effect of the vertical mobility, i.e., altitude variation, and horizontal distance randomness are now captured by the \ac{RV} $W_0$.

We characterize the coverage probability of mobile UAV-UEs under nearest association in the next corollary (whose proof follows Theorem \ref{lem-stat-near}). 
\begin{corollary}
\label{coro-mob-near}
The coverage probability of mobile UAV-UEs under nearest association is given by:  
\begin{align} 
\label{cov-prob-near-mobile}
\mathbb{P}_{c} = \int_{h_0}^{\infty} \mathbb{P}_{c|w_0} f_{W_0}(w_0) \dd{w_0},
\end{align} 
where $\mathbb{P}_{c|w_0}$ is calculated as in (\ref{los-nlos-LT0}), with
 $\varpi_l=\frac{\vartheta w_0^{\alpha_l}m_l}{A_lG(w_0,h)}$. The Laplace transform of \ac{LoS} interference is given by $\Lc_{I_l}(\varpi_l)  = e^{- \gamma(\varpi_l)}$, 
where $\gamma(\varpi_l)= 2\pi\lambda_b \sum_{j=j_0}^{\infty} \Pb_{l}(s)  
\int_{{\rm max}(w_0,s)}^{t}  		% {\rm max}(r_0,
  \Big(1 - \big( \frac{m_l}{m_l + \varpi_l  A_l G(w,h) w^{-\alpha_l} } \big)^{m_l}  \Big) w \dd{w}$, $j_0 = \floor{\sqrt{w_0^2-h^2}\sqrt{a\eta}}$, $s = \frac{j}{\sqrt{a\eta + h^2}}$, $t=\frac{j+1}{\sqrt{a\eta + h^2}}$,  and $\Pb_{l}(j)$ is calculated from (\ref{prob-los}).   
\end{corollary}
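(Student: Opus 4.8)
The plan is to reuse the argument of Theorem~\ref{lem-stat-near} almost verbatim, the only structural change being that for a vertically mobile UAV-UE the two independent sources of randomness---the horizontal distance $R_0$ and the steady-state altitude $Z$---are merged into the single three-dimensional serving distance $W_0=\sqrt{R_0^2+(Z-h_{\rm BS})^2}$, so that all conditioning is carried out on $W_0=w_0$ rather than on $R_0=r_0$. Exploiting the translation invariance of the \ac{PPP} I again fix the typical UAV-UE at the origin and write the conditional $\sir$ as in (\ref{sir-pwr}), but now with the serving antenna-gain-plus-path-loss evaluated at the 3D distance, $\zeta_l(w_0)=A_l\,G(w_0,h)\,w_0^{-\alpha_l}$, where the elevation is taken at the mean altitude through $\theta={\rm arcsin}(h/w_0)$ with $h=h_d-h_{\rm BS}$ and $h_d=\Eb[Z]$.

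The one genuinely new ingredient I would establish first is the 3D distance density $f_{W_0}(w_0)$. Since $R_0$ and $Z$ are independent, their joint density factors as $f_{R_0}(r_0)f_{Z}(z)$ with $f_{R_0}(r_0)=2\pi\lambda_b r_0 e^{-\pi\lambda_b r_0^2}$ and $f_Z$ the steady-state \ac{RWP} altitude density recalled from \cite{amer2019mobility}; applying the change of variables $(R_0,Z)\mapsto W_0$ and integrating out the altitude over $[h_1,h_2]$ yields the quoted $f_{W_0}(w_0)=2\pi\lambda_b w_0\,\Omega(h_1,h_2)\,e^{-\pi\lambda_b w_0^2}$. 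I expect this transformation---with the error function $\text{erfi}$ appearing through the Gaussian-type altitude integrals---to be the main obstacle and essentially the only step requiring real work; everything downstream is mechanical, so in the write-up I would merely sketch it, its closed form being already stated in the text.

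Conditioning on $W_0=w_0$, coverage becomes $\mathbb{P}_{c|w_0}=\Pb\big(\chi_0>\varpi_l I\mid W_0=w_0\big)$ with $\varpi_l=\vartheta w_0^{\alpha_l}m_l/(A_l G(w_0,h))$ and $\chi_0\sim\Gamma(m_l,1/m_l)$ for integer $m_l$, so the finite-sum Gamma-CCDF expansion reproduces (\ref{los-nlos-LT0}) unchanged, the independence of the \ac{LoS} and \ac{NLoS} interference lets the Laplace transform factor as $\Lc_I=\Lc_{I_l}\Lc_{I_n}$, and each factor is evaluated via the \ac{PGFL} of the interfering \ac{PPP}. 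The only quantity that must be rewritten is $\Lc_{I_l}(\varpi_l)=e^{-\gamma(\varpi_l)}$, whose integral now runs over the 3D interferer distance $w$ with lower limit ${\rm max}(w_0,s)$ (an interferer cannot be nearer than the serving link) and with the blockage probability of (\ref{prob-los}), originally indexed by horizontal distance, re-expressed by mapping the building-grid spacing into the 3D domain through $j_0=\floor{\sqrt{w_0^2-h^2}\sqrt{a\eta}}$, $s=j/\sqrt{a\eta+h^2}$, and $t=(j+1)/\sqrt{a\eta+h^2}$. The \ac{NLoS} factor follows identically, and integrating the conditional coverage against $f_{W_0}(w_0)$ over $w_0\ge h_0$ delivers (\ref{cov-prob-near-mobile}).
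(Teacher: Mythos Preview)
Your proposal is correct and mirrors the paper's own approach exactly: the paper likewise merges $R_0$ and $Z$ into the single 3D serving distance $W_0$, derives $f_{W_0}$ by the independence-plus-change-of-variables argument (details omitted there as well), sets $\theta=\arcsin(h/w_0)$ at the mean altitude, and then states that the remainder ``follows Theorem~\ref{lem-stat-near}''. Your identification of the $f_{W_0}$ derivation as the only substantive new step, with everything downstream---the Gamma-CCDF expansion, the $\Lc_{I_l}\Lc_{I_n}$ factorisation via the \ac{PGFL}, and the re-indexing of the step-function blockage probability into 3D distances---being mechanical, is precisely the paper's stance.
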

% with $o=\max(0,\floor{\frac{j\sqrt{a\eta}}{1000}-1})$
%   \blue{- $w=\sqrt{h^2+r^2}$ to shift the limit of integrations.}
 % the proof of 		due to limited space
%\blue{The proof of Corollary \ref{coro-mob-near} proceeds in a similar way to Theorem \ref{lem-stat-near}, with the details omitted for brevity. It can be }
%
%In particular, what the Nakagami fading parameter $m_l$, antenna down-tilting angle,  and the collaboration distance $R_c$ entail for the performance of mobile UAV-UEs is similar to the that of static UAV-UEs. Finally, a simple lower bound on the mobile UAV-UE coverage probability can be obtained similar to Corollary \ref{ch5:cov-prob-lb}, with the detailed omitted due to space limitation. 
% \blue{The effect of the antenna gain on the coverage probability of mobile UAV-UEs in (\ref{cov-prob-near-mobile}) can be interpreted in a similar way to the static scenario in (\ref{lem-stat-near}). Moreover, conditioning on $W_0=w_0$, the yielded expression holds the same insights as for static UAV-UEs in Section \ref{sec:near-assoc}, i.e., the coverage probability  does not scale with $N_t$.}
 
% in (\ref{cov-prob-near-mobile})
The antenna gain effect on the coverage probability of mobile UAV-UEs can be interpreted in a similar way to the static scenario in (\ref{lem-stat-near}). Particularly, conditioning on $W_0=w_0$, the yielded expression holds the same insights as for static users, i.e., the coverage probability  does not scale with $N_t$.

\vspace{-0.1 cm}
\subsection{Highest Average Received Power Association}
\vspace{-0.05 cm}
\label{hand-off}	% Unlike the nearest association case, 
Mobile UAV-UEs under \ac{HARP} association are prone to frequent \emph{altitude handovers}. This happens when their trajectory crosses multiple peaks and nulls of the antennas' side-lobes. For instance, Fig.~\ref{system-model-comp0} shows that the UAV-UE at altitude $h_1$ associates to the right BS, while it turns to attach to the left BS at altitude $h_2$. This \emph{altitude handover} negatively impacts the performance of UAV-UEs as it results in dropped connections. In fact, the elevation angle $\theta=\arctan(\frac{z-h_{\rm BS}}{r_m})$ plays a crucial role on determining the serving BS. Hence, it essential to average over the random distance $R_m$ (as in Corollary \ref{geom-approx}) for every possible $Z$, to correctly account for the handover and cell selection. Motivated by this fact, we describe the coverage probability as stated in the next corollary.
%  probability as next corollary 
\begin{corollary}
The approximated coverage probability of mobile UAV-UEs under \ac{HARP} association is described as:  
\begin{align} 
\label{cov-prob-near-mobile}
\mathbb{P}_{c} = \int_{z=h_1}^{h_2}\int_{r_m=0}^{\infty} \mathbb{P}_{c|r_m,z} f_{R_m}(r_m) f_{Z}(z)\dd{r_m} \dd{z},
% \nonumber
\end{align} 
where $\mathbb{P}_{c|r_m,z}$ is calculated as $\mathbb{P}_{c|r_0}$ in Theorem \ref{lem-stat-near}, with $\varpi_l=\frac{\vartheta (r_m^2+(z-h_{\rm BS}))^{\alpha_l/2}m_l}{A_lG(r_m,z-h_{\rm BS})}$. The Laplace transform of  \ac{LoS} interference is $\Lc_{I_l}(\varpi_l)  = e^{- \gamma(\varpi_l)}$, where $\gamma(\varpi_l)= 2\pi\lambda_b \sum_{j=0}^{\infty} \Pb_{l}(j)  
\int_{\frac{j}{\sqrt{a\eta}}}^{\frac{j+1}{\sqrt{a\eta}}}  		% {\rm max}(r_0,
  \Big(1 - \big( \frac{m_l}{m_l + \varpi_l \zeta_l(r)} \big)^{m_l}  \Big) r \dd{r}$.  % with $o=\floor{j-1}$  \frac{j}{1000} 
\end{corollary}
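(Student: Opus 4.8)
The plan is to combine the geometry-based serving-distance approximation of Corollary~\ref{geom-approx} with the steady-state altitude law $f_Z(z)$ of the vertical \ac{RWP} model, and then to close the argument by invoking Theorem~\ref{lem-stat-near} for the conditional coverage probability. The structural feature that dictates the whole approach is that, under \ac{HARP} association, the serving \ac{BS} is re-selected as the altitude varies, because the elevation angle $\theta=\arctan\!\big(\tfrac{z-h_{\rm BS}}{r_m}\big)$---and hence the antenna gain $G(r_m,z-h_{\rm BS})$---changes with $z$. This is exactly the \emph{altitude handover} mechanism, and it is why, unlike the nearest-association case of Corollary~\ref{coro-mob-near}, I cannot fold $Z$ and $R_m$ into a single \ac{3D} distance \ac{RV}: both variables must be retained and averaged over jointly.

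First I would condition on the altitude $Z=z$ and set $h=z-h_{\rm BS}$. For this fixed $h$ I reproduce the geometry-based approximation of Corollary~\ref{geom-approx}: the null locations $r_{nj}=h/\tan\!\big(\arcsin(\tfrac{2j}{N})\big)$ and the peak locations $r_{pj}=h/\tan(\theta_{pj})$, with $\theta_{pj}$ the roots of $\partial L(\theta)/\partial\theta=0$, are recomputed, the doubled-density $2\lambda_b$ \ac{PPP} is placed on each interval $[r_{pj},r_{nj}]$, and the piecewise serving-distance density $f_{R_m}(r_m)$ follows verbatim. The key point to flag is that this density is altitude-dependent through the lobe geometry, so $f_{R_m}(r_m)$ in the statement is understood as the Corollary~\ref{geom-approx} density evaluated at $h=z-h_{\rm BS}$; letting the peaks and nulls drift with $z$ is precisely what encodes the altitude-driven cell reselection.

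Next, conditioning additionally on $R_m=r_m$, the received $\sir$ has the same form as~(\ref{sir-pwr}) with serving distance $d=\sqrt{r_m^2+(z-h_{\rm BS})^2}$ and serving gain $G(r_m,z-h_{\rm BS})$. The conditional coverage probability $\mathbb{P}_{c|r_m,z}$ therefore follows directly from Theorem~\ref{lem-stat-near}: the Nakagami-$m_l$ (Gamma) serving power yields the derivative-of-Laplace-transform expansion~(\ref{los-nlos-LT0}) with $\varpi_l=\frac{\vartheta d^{\alpha_l}m_l}{A_lG(r_m,z-h_{\rm BS})}$, and the \ac{LoS}/\ac{NLoS} interference Laplace transforms are obtained from the \ac{PGFL} of the \ac{PPP} together with the distance-dependent \ac{LoS} probability $\Pb_l$, giving $\Lc_{I_l}(\varpi_l)=e^{-\gamma(\varpi_l)}$ with the stated $\gamma(\varpi_l)$ and $\Lc_{I_n}$ in the same manner. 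Finally I de-condition by integrating $\mathbb{P}_{c|r_m,z}$ against $f_{R_m}(r_m)\,f_Z(z)$ over $r_m\in(0,\infty)$ and $z\in[h_1,h_2]$, which produces the claimed double integral.

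The main obstacle is conceptual rather than computational: I must justify that the steady-state joint law of altitude and serving horizontal distance decomposes as $f_{R_m}(r_m\mid z)\,f_Z(z)$, and that at each altitude the \ac{HARP} rule effectively re-draws the horizontal serving distance from the Corollary~\ref{geom-approx} approximation evaluated at that altitude. Writing the product as $f_{R_m}(r_m)\,f_Z(z)$ is a mild abuse of notation, since $R_m$ is not genuinely independent of $Z$---its density inherits the altitude-dependent lobe geometry---so the care lies in arguing that this product is in fact the correct conditional-times-marginal factorization. Once this modeling step is granted, everything else reduces to the already-established machinery of Theorem~\ref{lem-stat-near} and Corollary~\ref{geom-approx}, so no genuinely new calculation is required.
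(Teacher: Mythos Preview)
Your proposal is correct and follows essentially the same approach as the paper: the paper likewise motivates the corollary by noting that the elevation angle $\theta=\arctan\!\big(\tfrac{z-h_{\rm BS}}{r_m}\big)$ governs cell selection under \ac{HARP}, and therefore averages over $R_m$ (via the Corollary~\ref{geom-approx} approximation) for each altitude $Z$ before de-conditioning against $f_Z(z)$, with the conditional coverage probability supplied by Theorem~\ref{lem-stat-near}. Your additional remark that $f_{R_m}(r_m)$ is implicitly altitude-dependent and that the product $f_{R_m}(r_m)\,f_Z(z)$ should be read as a conditional-times-marginal factorization is a valid clarification that the paper leaves tacit.
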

% and $\zeta_l(r)= A_l G(r,z-h_{\rm BS})) \big(r^2 + (z-h_{\rm BS})^2\big)^{-\alpha_v/2}$

To account for the UAV-UE mobility in the coverage probability expression, similar to \cite{amer2019mobility} and \cite{8692749}, we consider a linear function that reflects the cost of handovers. Particularly, we define the mobile UAV-UE coverage probability as:
\begin{align}
 \label{mob-cov0}
\mathbb{P}_{c}(\bar{\nu},\lambda_b,\beta)  = % \Pb(\Upsilon \geq \vartheta, \bar{H})+(1-\beta) \Pb(\Upsilon \geq \vartheta,H), 
(1-\beta) \mathbb{P}_{c} +\beta\Pb(\bar{H}) \mathbb{P}_{c},  
\end{align}
where $\beta\in[0, 1]$ represents the handover cost and $\Pb(\bar{H})$ is the probability of no handover. The   handover probability is defined as the probability that an UAV-UE associates to a new BS rather than the serving BS  after a unit time. It is clear from (\ref{mob-cov0}) that, if $\beta=1$, the first term vanishes and the UAV-UE will be in coverage only if there is no handover associated with its mobility.

\vspace{-0.2 cm}
\section{Numerical Results}
\vspace{-0.1 cm}
\label{num-result}
For our simulations, we consider a network having the parameter values indicated in Table \ref{ch4:table:sim-parameter}. % Monte Carlo simulations are used to corroborate the developed mathematical model.

\begin{table}[!tb]
\vspace{-0.9 cm}
\caption{Simulation Parameters} % title of Table
\centering % used for centering table
\begin{tabular}{c c c} % centered columns (3 columns)
\hline\hline %inserts double horizontal lines
Description & Parameter & Value  \\ [0.5ex] % inserts table
%heading
\hline % inserts single horizontal line
LoS and NLoS path-loss exponents& $\alpha_{l}$, $\alpha_{n}$& 2.09, 3.75 \\ %[1ex] % [1ex] adds vertical space
LoS and NLoS path-loss constants& $A_{l}$, $A_{n}$& -41.1, \SI{-32.9}{dB} \\ %[1ex] % adds vertical space
% Antenna main lobe gain&$G_m$&\SI{10}{dB}\\		%10
Number of antenna elements&$N_t$&8\\		%10
\ac{LoS} and \ac{NLoS} fading parameters&$m_l$, $m_n$&3, 1\\
BS height and UAV-UE altitude&$h_{\textrm{BS}}$, $h_d$&\SI{30}{m}, \SI{150}{m}\\
UAV-UE lowest and highest altitudes&$h_1$, $h_2$&\SI{140}{m}, \SI{160}{m}\\
Mobile UAV-UE speed& $\bar{\nu}$& \SI{20}{kmh} \\
Environment blocking parameters&$a$, $\eta$, $c$& 0.6, \SI{500}{km}$^{-2}$, \SI{30}{m}\\		% \SI{200}{km}$^{-2}$
%Area fraction occupied by buildings &$a$& 0.3\\
%Mean number of buildings &$e$&\SI{200}{km}$^{-2}$\\
%Buildings height Rayleigh parameter &$c$&15\\
Density of \acp{BS} and $\sir$ threshold&$\lambda_{b}$, $\vartheta$&\SI{50}{km}$^{-2}$, \SI{-15}{\deci\bel} 
\\
% $\sir$ threshold&$\vartheta$&\SI{0}{\deci\bel}\\
% Down-tilt angle&$\varphi$& $0^\circ$, $30^\circ$\\
\hline %inserts single line		and \red{vertical beamwidth}	, $\Omega$
\end{tabular}
\label{ch4:table:sim-parameter} % is used to refer this table in the text
\vspace{-0.3cm}
\end{table}

 \begin{figure} [!tb]	%[!t] %%    [htbp]
  \vspace{-0.9 cm}
\centering
\includegraphics[width=0.3 \textwidth]{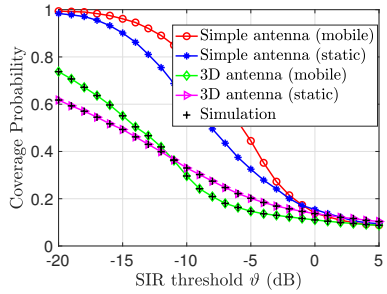}      %letter 	cov_prob_3D_near.eps
\caption {Coverage probability of static and mobile UAV-UEs under nearest association scheme versus the \ac{SIR} threshold $\vartheta$.} 
\label{cov-prob-near-assoc}
 \vspace{-0.4 cm}
\end{figure}

\vspace{-0.2 cm}
\subsection{Nearest Association}
 \vspace{-0.0 cm}
 \begin{figure} [!tb]	%[!t] %%    [htbp]
 \vspace{0.0 cm}
\centering
\includegraphics[width=0.3 \textwidth]{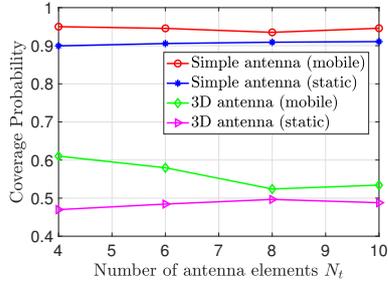}	%letter 	cov_prob_near_Nt.eps	
\caption {Coverage probability of static and mobile UAV-UEs under nearest association versus the number of antenna elements $N_t$.} %$N_t$
\label{static_cov_prob}
 \vspace{-0.1 cm}
\end{figure}
%%%%%%%%%%%%%%%%%%%%%%%%%%%%%%%%%%%%%%%%%

 %
%
We first evaluate the performance of static and mobile UAV-UEs under the nearest association scheme. We compare the UAV-UE coverage probability under practical antenna patterns with that adopting  simple antenna models, e.g., \cite{8692749,8713514,amer2019mobility}. Particularly, high-altitude UAV-UEs are essentially served and interfered from the antennas' side-lobes. Hence, for a simple antenna model, the antenna gain effect is normalized. Fig.~\ref{cov-prob-near-assoc} first shows that the  performance attained under a simple antenna model is superior to that from a practical antenna pattern.  This implies that adopting practical antenna models is vital to convey a realistic performance evaluation of the UAV-UEs. Fig.~\ref{cov-prob-near-assoc} also compares the performance of mobile UAV-UEs to static counterparts under nearest association. The effect of vertical mobility on the UAV-UE coverage probability is relatively marginal. This is attributed to the fact that there is no altitude handover associated with the vertical mobility as the UAV-UE maintains its connection with the nearest BS.

Fig.~\ref{static_cov_prob} investigates the prominent effect of the number antenna elements $N_t$ on the UAV-UE coverage probability. Fig.~\ref{static_cov_prob} shows that an increase in the number of antenna elements has an intangible effect on the coverage probability of static and mobile UAV-UEs under nearest association, hence verifying the claim in Corollary \ref{eff-of-Nt}. This also can be interpreted by the fact that, while increasing $N_t$ yields a higher number of lobes, the integrands in the coverage probability expression, which are functions of the antenna gains, constitute an overall area that does not significantly change with $N_t$.

\begin{figure} [!t]	%[!t] %%    [htbp]
 \vspace{-0.9 cm}
\centering
\includegraphics[width=0.30 \textwidth]{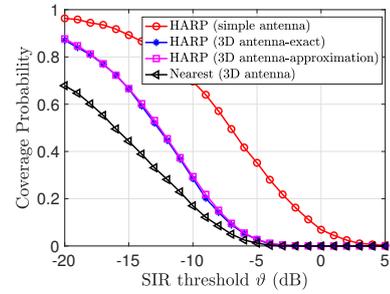}		%letter			
\caption {Coverage probability of static UAV-UEs under \ac{HARP} association ($m_v=1$, $N_t=4$).} 
\label{static_cov_prob_2}
 \vspace{-0.45 cm}
\end{figure}

 \vspace{-0.2 cm}
\subsection{HARP Association}
 \vspace{-0.1 cm}
Next, we discuss the performance of UAV-UEs under \ac{HARP} association. Fig.~\ref{static_cov_prob_2} verifies the accuracy of the geometry-based approximation of Corollary \ref{geom-approx}. Fig.~\ref{static_cov_prob_2} presents the exact and the obtained approximation of the static UAV-UE coverage probability versus the \ac{SIR} threshold $\vartheta$. Clearly, the adopted approximation is relatively tight. Fig.~\ref{static_cov_prob_2} also shows that the coverage probability of static UAV-UEs under practical antenna patterns is much reduced compared to the achievable coverage probability  from a simple antenna model.  

% tight, especially at high density of BSs and \blue{low number of antennas}
%\blue{Fig.~\ref{static_cov_prob_2}} illustrates the effect of ground BS density $\lambda_b$ on the UAV-UE coverage probability. While the coverage probability under the simple antenna assumption is independent of $\lambda_b$, as well established for \ac{PPP} modeling networks, it is slightly declines as $\lambda_b$ increases. This shows that \dots

%It is also noted that as long as the UAV-UE has frequent vertical movements, i.e., larger $\hbar$, the handover probability is lower since the effective horizontal travelled distance becomes shorter. The handover probability also monotonically increases with $\lambda_b$ since a higher rate of handover occurs for denser networks. Fig.~\ref{prob-of-ho-comp} shows the inter-CoMP handover probability versus the inter-cluster center distance $2R_h$. The handover probability monotonically decreases with $R_h$ since a lower rate of handover is anticipated when the cluster size increases. 
%% Having characterized the \ac{3D} RWP mobility model
%Next, we will use our proposed RWP model to obtain the coverage probability of \ac{3D} mobile UAV-UEs under the nearest association and CoMP transmission schemes.
% at different 

In Fig.~\ref{mob_cov_prob}, we plot the coverage probability of mobile UAV-UEs  versus the number of antenna elements $N_t$, at different penalty costs $\beta$. Fig.~\ref{mob_cov_prob} shows that the UAV-UE coverage probability monotonically decreases as both $N_t$ and $\beta$ increase. This can be interpreted by the fact that as long as $N_t$ increases, the mobile UAV-UE becomes more susceptible to handovers, which are penalized by the cost $\beta$. % In fact, as $N_t$ increases, higher handover rate is encountered, which lead to higher handover failures, which is represented by the penalty parameter $\beta$. % This also explains why the rate of decrease in the UAV-UE coverage probability grows linearly with $\beta$. 

Finally, Fig.~\ref{ho_vs_N} investigates the relation between the handover rate $H$ (\SI{}{sec^{-1}}) and the number of antenna elements $N_t$. The handover rate is numerically calculated from 
\begin{align}
H = \Eb\Big[ \frac{\#\text{ handoffs per each movement}}{\text{movement length}}\Big] \times \bar{\nu}, 
\nonumber  
\end{align}
where the movement is generated according to the adopted vertical mobility model. Fig.~\ref{ho_vs_N} shows that the handover rate monotonically grows with the  number of antenna elements. This is due to the higher number of nulls and peaks of the antenna vertical gain the UAV-UE crosses along its trajectory. \emph{From this result, we conclude  that a larger number of antenna elements yields a higher rate of altitude handovers.}
% This sequence of nulls and peaks directly leads to frequent handover execution.
% observation verifies the claim in Section \ref{hand-off}

 \begin{figure} [!t]	%[!t] %%    [htbp]	cov_prob_mob_uavue12.eps
  \vspace{0.0 cm}
\centering
\includegraphics[width=0.30 \textwidth]{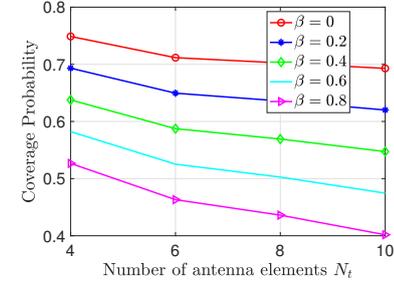}		%letter			
\caption {Coverage probability of mobile UAV-UEs under \ac{HARP} association ($h_d=\SI{100}{m}$,  $h_1=\SI{80}{m}$, $h_2=\SI{120}{m}$).}
\label{mob_cov_prob}
 \vspace{-0.4 cm}
\end{figure}

 \begin{figure} [!t]	%[!t] %%    [htbp]		ho_rate_vs_N2.eps
 \vspace{-0.9 cm}
\centering
\includegraphics[width=0.3 \textwidth]{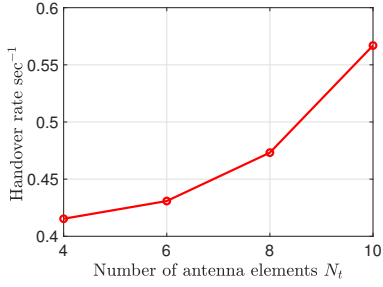} %letter. ho_rate_vs_N2.eps			
\caption {The handover rate versus the number of antenna elements ($h_d=\SI{100}{m}$, $h_1=\SI{80}{m}$, $h_2=\SI{120}{m}$).}  % \magenta{$\lambda_b$}
\label{ho_vs_N}
\vspace{-0.4 cm}
\end{figure}

% we observe that the integrand decreases with ?, hence verifying the claim.
%As illustrated in Fig.~\ref{ho_vs_N}, for $h_d<h_{{\rm BS}}$, the performance of the UAV-UE is maximized at certain $\varphi$, and beyond that it starts to degrade. However, for ground \acp{UE}, their performance is maximized at a higher $\varphi$. \magenta{Hence, adjusting the antennas' down-tilt angle yields a tradeoff between the performance of AUs and GUs owing to the difference in their altitudes}. 

% \begin{figure} [!tb]	%[!t] %%    [htbp]
%\centering
%\includegraphics[width=0.40\textwidth]{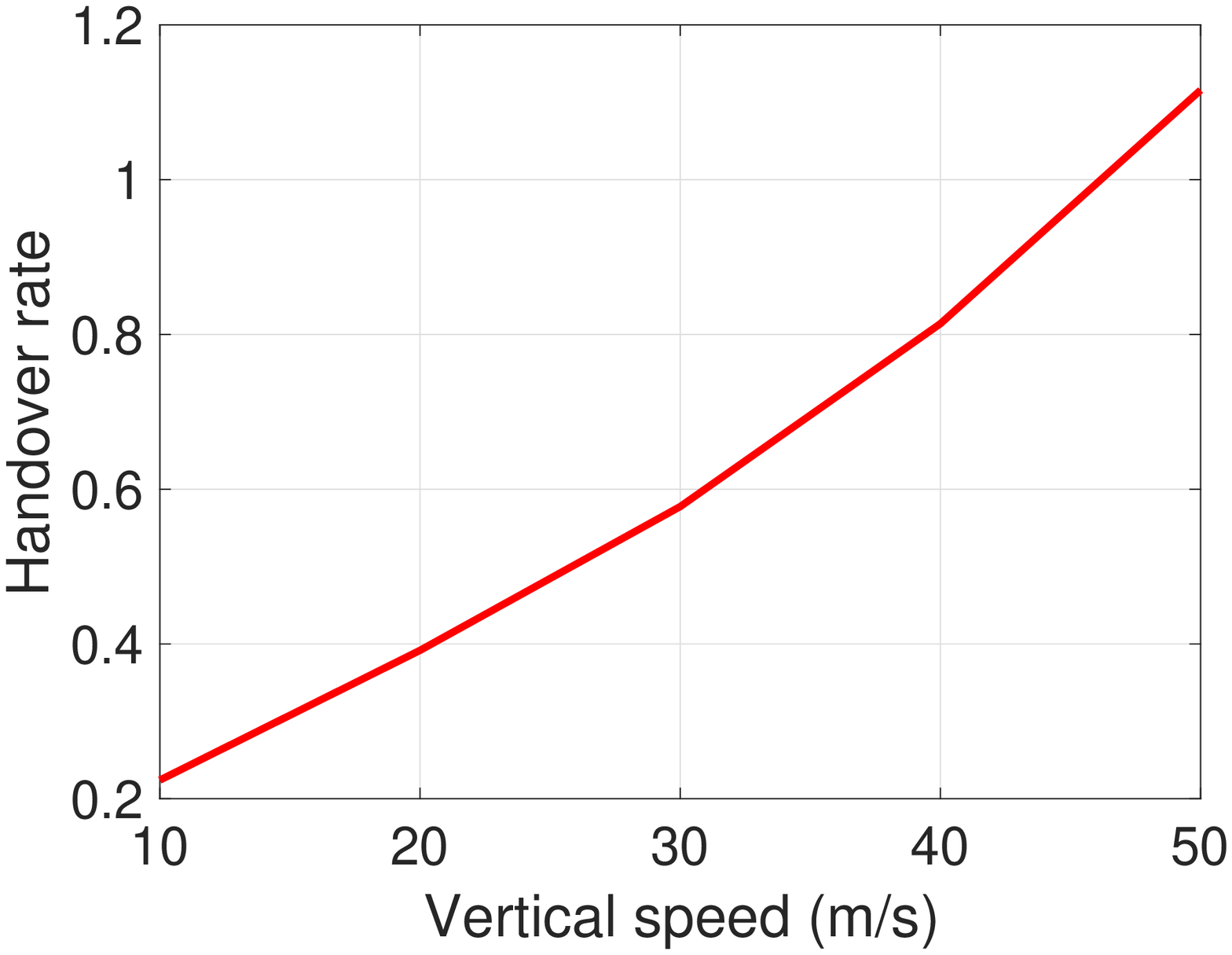}		%letter			
%\caption {\red{Handover rate of mobile UAV-UEs versus vertical speed.}}
%\label{ho_vs_speed}
% \vspace{-0.4 cm}
%\end{figure}
%
%\blue{In Fig.~\ref{ho_vs_speed}, the handover rate is plotted versus the vertical speed of mobile UAV-UEs.} This finding can be explained as follows: when $\varphi$ is small, an UAV-UE at an altitude $h_d>h_{{\rm BS}}$ can be served from the main lobe of tagged BS while also experiencing high interference from the main lobe of other interfering BSs.  Gradually, as $\varphi$ increases, the worst performance is observed when the UAV-UE is no longer served from the main lobe of tagged BS antennas while still experiencing high interference from the main lobe of other BSs. Finally, for very large $\varphi$, both intended and interfering signals stem from the side-lobes, and hence the performance is improved again. %of ground BS, which 

 \vspace{-0.2 cm}
\section{Conclusion} 
In this paper, we have studied the performance of UAV-UEs under practical antenna configurations. The coverage probability of static and mobile UAV-UEs are characterized as functions of the system parameters, namely, the number of antenna elements, density of BSs, and the UAV-UE altitude for both nearest and \ac{HARP} associations. The handover rate is also investigated to reveal the impact of practical antenna pattern on the cell association of mobile UAV-UEs. We have shown that the overall performance under practical antenna patterns is worse than that attained from a simple antenna model. Moreover, for static UAV-UEs, or mobile static UAV-UEs undergoing nearest association, the increase of the number of the antenna elements is shown to have a slight impact on their  achievable performance.  Conversely, for \ac{HARP} association, the coverage probability of mobile UAV-UE decreases as the number of antenna elements increases  due to the excessive rate of \emph{altitude handover} that is penalized by the handover cost $\beta$. % \red{CoMP and caching} 

\vspace{-0.2 cm}
\begin{appendix}
\vspace{-0.1 cm}
We first let $d=\sqrt{h^2+r^2}$ denote the communication link distance. The conditional coverage probability is given by:
\begin{align}
&\mathbb{P}_{c|r_0} = \Pb\Big(\frac{ \Upsilon_{|r_0}  }{I}>\vartheta\Big) = 
\Pb\Big(\frac{\chi_0 \zeta_l(r_0)  }{I}>\vartheta\Big)  
\nonumber \\
&
 % = \Pb\Big(\frac{\chi_0 A_l G(r_0,h) d_0^{-\alpha_l} }{I}>\vartheta\Big) 
= \Pb\Big(\chi_0>\frac{\vartheta d_0^{\alpha_l}}{A_l G(r_0,h)} I\Big) 
%\nonumber \\
  %\label{prob-y2} %
 %&
 \overset{(a)}{=}   \sum_{i=0}^{m_l-1}  \Big(\frac{\vartheta d_0^{\alpha_l}m_l}{A_l G(r_0,h)}\Big)^i \frac{1}{i!}I^i 
 \times
  \nonumber 
  \\ & 
%  \end{align}
%  \begin{align}
 e^{- \frac{\vartheta d_0^{\alpha_l}m_l}{A_l G(r_0,h)} I} \overset{(b)}{=}  \Big[ \sum_{i=0}^{m_l-1}  \frac{(-\varpi_l)^i}{i!}
\frac{\partial^i}{\partial\varpi_l^i} \Lc_{I}(\varpi_l)  \Big],
\end{align}
where (a) follows from the \ac{PDF} of the Gamma \ac{RV} $\chi_0$, (b) follows from the Laplace transform of aggregate interference, i.e., the \ac{RV} $I$. The \ac{LoS} and \ac{NLoS} interfering BSs are distributed independently of one another \cite{8713514}. Let $\lambda_v'(r)=2\pi\lambda_b r\Pb_{v}(r)$, $v\in\{l,n\}$, be the density of the inhomogeneous \acp{PPP} modeling each kind of the two interfering BS types. Hence, the Laplace transform of interference can be separated into a product of the Laplace transforms of the \ac{LoS} and \ac{NLoS} interference. Accordingly, the coverage probability will be \cite{8713514}:  
\begin{align}
\mathbb{P}_{c|r_0} & \overset{}{=}   \sum_{i=0}^{m_l-1} \frac{(-\varpi_l)^i}{i!}  \sum_{i_n+i_l=i}
\frac{i!}{i_n!i_l!}
 \frac{\varpi_l^{i_n}}{i_n!}
\frac{\partial^{i_n}}{\partial\varpi_l^{i_n}} \Lc_{I_n}(\varpi_l)  
\nonumber \\
\label{los-nlos-LT10}
& \quad \quad\quad \quad\quad \quad\quad \quad  \quad\quad \quad
\times \frac{\varpi_l^{i_l}}{i_l!}
\frac{\partial^{i_l}}{\partial\varpi_l^{i_l}} \Lc_{I_l}(\varpi_l)  ,
\end{align}
where $\Lc_{I_l}$ and $\Lc_{I_n}$ are the Laplace transforms of the \ac{LoS} and \ac{NLoS} interference, respectively.  The second sum is over all of the combinations of non-negative integers $i_l$ and $i_n$. We obtain the Laplace transform of \ac{LoS} interference $\Lc_{I_l}(\varpi_l)$, whereas the \ac{NLoS} interference follows in the same manner.   %, $v\in\{l,n\}$  
% The Laplace transform of \ac{LoS} interference is given by: 
\begin{align}
& \Lc_{I_l}(\varpi_l)  = \Eb_{I_l} 
\Big[ e^{-\varpi_l I_l}\Big]
= \mathbb{E} \Bigg[
 e^{-\varpi_l\sum_{r \in \Phi_b^{!}}  \chi_r  \zeta_l(r)}  
  \Bigg] 
\nonumber \\
 &
 = \mathbb{E}_{\Phi_b,\chi_r} 			% \mathbb{E}_{}
  \Bigg[  
  \prod_{r \in \Phi_b^{!}}  e^{- \varpi_l \chi_r  \zeta_l(r)}
  \Bigg] 
%  \nonumber \\ 
%   &
   = \mathbb{E}_{\Phi_b} 			% \mathbb{E}_{}
  \Bigg[  
  \prod_{r \in \Phi_b^{!}} \mathbb{E}_{\chi_r} e^{- \varpi_l \chi_r  \zeta_l(r)}
  \Bigg] 
  \nonumber \\
  & 
   \overset{(a)}{=} 
\mathbb{E}_{\Phi_b} 			
  \Bigg[  
  \prod_{r \in \Phi_b^{!}}  \Big(\frac{m_v}{m_v+\varpi_l \zeta_l(r} \Big)^{m_v}
  \Bigg],  
%  = \mathbb{E}_{\Phi_b} 			% \mathbb{E}_{}
%  \Bigg[  
%  \prod_{r \in \Phi_b^{!}}  \frac{1}{\Big(1 + \frac{\varpi_l \zeta_l(r)   }{m_v} \Big)^{m_v}}
%  \Bigg] 
%  \nonumber 
%  \label{before-chi}
%        &\overset{(a)}{=} 
% {\rm exp}\Bigg(-2\pi \lambda_b\int_{\nu=R_c}^{\infty}\Big(1 - \mathbb{E}_{\chi} 
%       e^{-\varpi_l \chi  P(\nu)^2}
%        \Big)\nu\dd{\nu}\Bigg) 
%\\ 
%\label{LT_c1}      
%  &\overset{(b)}{=} {\rm exp}\Bigg(-2\pi \lambda_b\int_{\nu=R_c}^{\infty}\Big(1 -
%        \delta_l\Pb_{l}(\nu) - \delta_n\Pb_{n}(\nu) 
%        \Big)\nu\dd{\nu}\Bigg)
% \overset{(c)}{=}e^{\Omega(\varpi_l)_{|\boldsymbol{r}_{\kappa}}} ,
\end{align}
where $\Phi_b^{!}= \Phi_b\setminus r_0$ and (a) follows from $\chi_r\sim\Gamma(m_v,\frac{1}{m_v})$. Using the \ac{PGFL} of \ac{PPP} and the density of \ac{LoS} interferers $\lambda_l'(r)=2\pi\lambda_b r\Pb_{l}(r)$, the Laplace transform of \ac{LoS} interference will be $\Lc_{I_l}(\varpi_l)=$
\begin{align}
\label{LT-LoS10}
& = e^{-\overbrace{2\pi\lambda_b \int_{r_0}^{\infty} \Pb_{l}(r) 
  \Big(1 - \big( \frac{m_l}{m_l + \varpi_l \zeta_l(r)} \big)^{m_l} \Big) r \dd{r}}^{\gamma(\varpi_l)}}. 
%   \nonumber  
   \end{align}
From (\ref{prob-los}), $\Pb_{l}(r)$ is a step function of the serving distance $r$. We use this fact to separate  the integral above into a sum of weighted integrals, resulting in $\gamma(\varpi_l)=2\pi\lambda_b\times$
\begin{align}
&\sum_{j=\floor{r_0\sqrt{a\eta}}}^{\infty} \Pb_{l}(\varsigma)  
\int_{{\rm max}(r_0,\frac{j}{\sqrt{a\eta}})}^{\frac{j+1}{\sqrt{a\eta}}}  			% {\rm max}(
  \Big(1 - \big( \frac{m_l}{m_l + \varpi_l \zeta_l(r)} \big)^{m_l} \Big) r \dd{r}. 
    \nonumber 
%    \\
%&=2\pi\lambda_b \hspace{-4mm}\sum_{j=\floor{r_0\sqrt{a\eta}}}^{\infty} \hspace{-4mm}\Pb_{l}(\varsigma) \hspace{-2mm} 
%\int_{{\rm max}(r_0,\frac{j}{\sqrt{a\eta}})}^{\frac{j+1}{\sqrt{a\eta}}}  			% {\rm max}(
% \hspace{-2mm}  \Big(1 - \big(1- \frac{1}{1 + m_l(\varpi_l \zeta_l(r))^{-1}} \big)^{m_l} \Big) r \dd{r},
%    \nonumber
\end{align}
Substituting (\ref{LT-LoS10}) into (\ref{los-nlos-LT10}) yields the conditional coverage probability $\mathbb{P}_{c|r_0}$. Given the nearest distance \ac{PDF} $f_{R_0}(r_0)$, the unconditional coverage probability is obtained.

\textcolor{white}{\cite{baza1,baza2,baza3,baza4,baza5,baza6,baza7,baza8,baza9,baza10,baza11,baza12,baza13}}
% Caching Papers		% amer2019optimizing			amer2020cooperative
\textcolor{white}{\cite{Delay-Analysis,8647532,amer2020joint,8886101,8412262,amer2018optimizing,amer2019performance,amer20200caching,chaccour2019Reliability}}	
% UAV Papers
%\textcolor{white}{\cite{amer2020caching,8756296}}
% VR paper
% \cite{chaccour2019Reliability}	
% Master papers
 \textcolor{white}{\cite{7440654,7605066,7925123}}

%Using the binomial series expansion, $\gamma(\varpi_l)$ can be simplified to 
%
%\begin{align}
%2\pi\lambda_b\hspace{-4mm} \sum_{j=\floor{r_0\sqrt{a\eta}}}^{\infty} \hspace{-4mm}\Pb_{l}(\varsigma)    			% {\rm max}(
% \sum_{k=1}^{m_l}{m_l \choose k}(-1)^{k+1}\hspace{-2mm}\int_{{\rm max}(r_0,\frac{j}{\sqrt{a\eta}})}^{\frac{j+1}{\sqrt{a\eta}}}\underbrace{ \frac{r}{(1 + m_l(\varpi_l \zeta_l(r))^{-1})^k}}_{\chi}\dd{r},
%    \nonumber
%\end{align}

% \red{Using the Jensen's inequality}. 
%

\end{appendix}

\vspace{-0.6 cm}
\bibliographystyle{IEEEtran}
 %\bibliography{bibliography}
 \bibliography{public_files/bibliography}

% Generated by IEEEtran.bst, version: 1.14 (2015/08/26)
\begin{thebibliography}{10}
\providecommand{\url}[1]{#1}
\csname url@samestyle\endcsname
\providecommand{\newblock}{\relax}
\providecommand{\bibinfo}[2]{#2}
\providecommand{\BIBentrySTDinterwordspacing}{\spaceskip=0pt\relax}
\providecommand{\BIBentryALTinterwordstretchfactor}{4}
\providecommand{\BIBentryALTinterwordspacing}{\spaceskip=\fontdimen2\font plus
\BIBentryALTinterwordstretchfactor\fontdimen3\font minus
  \fontdimen4\font\relax}
\providecommand{\BIBforeignlanguage}[2]{{%
\expandafter\ifx\csname l@#1\endcsname\relax
\typeout{** WARNING: IEEEtran.bst: No hyphenation pattern has been}%
\typeout{** loaded for the language `#1'. Using the pattern for}%
\typeout{** the default language instead.}%
\else
\language=\csname l@#1\endcsname
\fi
#2}}
\providecommand{\BIBdecl}{\relax}
\BIBdecl

\bibitem{8660516}
M.~{Mozaffari}, W.~{Saad}, M.~{Bennis}, Y.~{Nam}, and M.~{Debbah}, ``A tutorial
  on {UAVs} for wireless networks: Applications, challenges, and open
  problems,'' \emph{IEEE Communications Surveys Tutorials}, pp. 1--1, 2019.

\bibitem{mozaffari2016unmanned}
M.~Mozaffari, W.~Saad, M.~Bennis, and M.~Debbah, ``Unmanned aerial vehicle with
  underlaid device-to-device communications: Performance and tradeoffs,''
  \emph{IEEE Transactions on Wireless Communications}, vol.~15, no.~6, pp.
  3949--3963, June 2016.

\bibitem{saad2019vision}
W.~Saad, M.~Bennis, and M.~Chen, ``A vision of {6G} wireless systems:
  Applications, trends, technologies, and open research problems,'' \emph{IEEE
  Network, to appear}, 2019.

\bibitem{kishk20193}
M.~A. Kishk, A.~Bader, and M.-S. Alouini, ``On the 3-d placement of airborne
  base stations using tethered uavs,'' \emph{arXiv preprint arXiv:1907.04299},
  2019.

\bibitem{azari2017coexistence}
M.~M. Azari, F.~Rosas, A.~Chiumento, and S.~Pollin, ``Coexistence of
  terrestrial and aerial users in cellular networks,'' in \emph{Proc. of IEEE
  Globecom Workshops (GC Wkshps)}, Singapore, Dec 2017, pp. 1--6.

\bibitem{lin2018sky}
X.~Lin, V.~Yajnanarayana, S.~D. Muruganathan, S.~Gao, H.~Asplund, H.-L.
  Maattanen, M.~Bergstrom, S.~Euler, and Y.-P.~E. Wang, ``The sky is not the
  limit: {LTE} for unmanned aerial vehicles,'' \emph{IEEE Communications
  Magazine}, vol.~56, no.~4, pp. 204--210, April 2018.

\bibitem{85317111}
S.~{Zhang}, Y.~{Zeng}, and R.~{Zhang}, ``Cellular-enabled {UAV} communication:
  A connectivity-constrained trajectory optimization perspective,'' \emph{IEEE
  Transactions on Communications}, vol.~67, no.~3, 2019.

\bibitem{amer2020caching}
R.~Amer, W.~Saad, H.~ElSawy, M.~Butt, and N.~Marchetti, ``Caching to the sky:
  Performance analysis of cache-assisted {CoMP} for cellular-connected
  {UAVs},'' in \emph{Proc. of the IEEE Wireless Communications and Networking
  Conference ({WCNC})}, Marrakech, Morocco, April. 2019.

\bibitem{8756296}
R.~{Amer}, W.~{Saad}, and N.~{Marchetti}, ``Towards a connected sky:
  Performance of beamforming with down-tilted antennas for ground and {UAV}
  user co-existence,'' \emph{IEEE Communications Letters}, pp. 1--1, 2019.

\bibitem{amer2019mobility}
R.~Amer, W.~Saad, and N.~Marchetti, ``Mobility in the sky: Performance and
  mobility analysis for cellular-connected {UAVs},'' \emph{arXiv preprint
  arXiv:1908.07774}, 2019.

\bibitem{8692749}
M.~M. {Azari}, F.~{Rosas}, and S.~{Pollin}, ``Cellular connectivity for {UAVs}:
  Network modeling, performance analysis and design guidelines,'' \emph{IEEE
  Transactions on Wireless Communications}, pp. 1--1, 2019.

\bibitem{8713514}
B.~{Galkin}, J.~{Kibilda}, and L.~{Da Silva}, ``A stochastic model for {UAV}
  networks positioned above demand hotspots in urban environments,'' \emph{IEEE
  Transactions on Vehicular Technology}, pp. 1--1, 2019.

\bibitem{8528463}
G.~Geraci, A.~Garcia-Rodriguez, L.~G. Giordano, D.~L{\'o}pez-P{\'e}rez, and
  E.~Bj{\"o}rnson, ``Understanding {UAV} cellular communications: From existing
  networks to massive {MIMO},'' \emph{IEEE Access}, vol.~6, 2018.

\bibitem{8756719}
X.~{Xu} and Y.~{Zeng}, ``Cellular-connected {UAV}: Performance analysis with
  {3D} antenna modelling,'' in \emph{IEEE International Conference on
  Communications Workshops (ICC Workshops)}, May 2019, pp. 1--6.

\bibitem{abs-1804-04523}
\BIBentryALTinterwordspacing
S.~Euler, H.~Maattanen, X.~Lin, Z.~Zou, M.~Bergstr{\"{o}}m, and J.~Sedin,
  ``Mobility support for cellular connected unmanned aerial vehicles:
  Performance and analysis,'' 2018. [Online]. Available:
  \url{http://arxiv.org/abs/1804.04523}
\BIBentrySTDinterwordspacing

\bibitem{HCC:3325421.3329770}
A.~Fakhreddine, C.~Bettstetter, S.~Hayat, R.~Muzaffar, and D.~Emini, ``Handover
  challenges for cellular-connected drones,'' in \emph{Proc. of ACM Workshop on
  Micro Aerial Vehicle Networks, Systems, and Applications}, NY, USA, 2019, pp.
  9--14.

\bibitem{3gpp-2017}
G.~T. 36.777, ``Technical specification group radio access network; study on
  enhanced {LTE} support for aerial vehicles,'' \emph{tech. rep., 5G Americas},
  Dec. 2017.

\bibitem{vu2014analytical}
T.-T. Vu, L.~Decreusefond, and P.~Martins, ``An analytical model for evaluating
  outage and handover probability of cellular wireless networks,''
  \emph{Wireless personal communications}, vol.~74, no.~4, pp. 1117--1127,
  2014.

\bibitem{haenggi2012stochastic}
M.~Haenggi, \emph{Stochastic geometry for wireless networks}.\hskip 1em plus
  0.5em minus 0.4em\relax Cambridge University Press, 2012.

\bibitem{rajan2006efficient}
S.~Rajan, S.~Wang, R.~Inkol, and A.~Joyal, ``Efficient approximations for the
  arctangent function,'' \emph{IEEE Signal Processing Magazine}, vol.~23,
  no.~3, pp. 108--111, 2006.

\bibitem{8671460}
P.~K. {Sharma} and D.~I. {Kim}, ``Random {3D} mobile {UAV} networks: Mobility
  modeling and coverage probability,'' \emph{IEEE Transactions on Wireless
  Communications}, vol.~18, no.~5, pp. 2527--2538, May 2019.

\bibitem{banagar2019performance}
M.~Banagar and H.~S. Dhillon, ``Performance characterization of canonical
  mobility models in drone cellular networks,'' \emph{arXiv preprint
  arXiv:1908.05243}, 2019.

\bibitem{banagar2019fund}
------, ``Fundamentals of drone cellular network analysis under random waypoint
  mobility model,'' in \emph{Proc. of the IEEE Global Communication Conference
  ({Globecom})}, Hawaii, USA, Dec. 2019.

\bibitem{banagar20193gpp}
------, ``{3GPP}-inspired stochastic geometry-based mobility model for a drone
  cellular network,'' in \emph{Proc. of the IEEE Global Communication
  Conference ({Globecom})}, Hawaii, USA, Dec. 2019.

\textcolor{white}{\bibitem{baza1}}
\textcolor{white}{M.~Baza, N.~Lasla, M.~Mahmoud, G.~Srivastava, and M.~Abdallah, ``B-ride: Ride
  sharing with privacy-preservation, trust and fair payment atop public
  blockchain,'' \emph{arXiv preprint arXiv:1906.09968}, 2019.}

\textcolor{white}{\bibitem{baza2}}
\textcolor{white}{M.~Baza, M.~Mahmoud, G.~Srivastava, W.~Alasmary, and M.~Younis, ``A light
  blockchain-powered privacy-preserving organization scheme for ride sharing
  services,'' \emph{Proc. of the IEEE 91th Vehicular Technology Conference
  (VTC-Spring), Antwerp, Belgium}, May 2020.}

\textcolor{white}{\bibitem{baza3}}
\textcolor{white}{M.~Baza, A.~Salazar, M.~Mahmoud, M.~Abdallah, and K.~Akkaya, ``On sharing
  models instead of the data for smart health applications,'' \emph{Proc. of
  IEEE International Conference on Informatics, IoT, and Enabling Technologies
  (ICIoT’20), Doha, Qatar}, 2020.}

\textcolor{white}{\bibitem{baza4}}
\textcolor{white}{W.~Al~Amiri, M.~Baza, M.~Mahmoud, W.~Alasmary, and K.~Akkaya, ``Towards secure
  smart parking system using blockchain technology,'' \emph{Proc. of 17th IEEE
  Annual Consumer Communications $\&$ Networking Conference (CCNC), Las vegas,
  USA}, 2020.}

\textcolor{white}{\bibitem{baza5}}
\textcolor{white}{------, ``Privacy-preserving smart parking system using blockchain and private
  information retrieval,'' \emph{Proc. of the IEEE SmartNets}, 2020.}

\textcolor{white}{\bibitem{baza6}}
\textcolor{white}{{M. Baza \textit{et al}.}, ``Blockchain-based firmware update scheme tailored
  for autonomous vehicles,'' \emph{Proc. of the IEEE Wireless Communications
  and Networking Conference (WCNC), Marrakech, Morocco}, April 2019.}

\textcolor{white}{\bibitem{baza7}}
\textcolor{white}{------, ``Detecting sybil attacks using proofs of work and location in
  vanets,'' \emph{arXiv preprint arXiv:1904.05845}, 2019.}

\textcolor{white}{\bibitem{baza8}}
\textcolor{white}{------, ``Blockchain-based charging coordination mechanism for smart grid
  energy storage units,'' \emph{Proc. Of IEEE International Conference on
  Blockchain, Atlanta, USA}, July 2019.}

\textcolor{white}{\bibitem{baza9}}
\textcolor{white}{------, ``Privacy-preserving and collusion-resistant charging coordination
  schemes for smart grid,'' \emph{arXiv preprint arXiv:1905.04666}, 2019.}

\textcolor{white}{\bibitem{baza10}}
\textcolor{white}{------, ``An efficient distributed approach for key management in microgrids,''
  \emph{Proc. of the Computer Engineering Conference (ICENCO), Egypt}, pp.
  19--24, 2015.}

\textcolor{white}{\bibitem{baza11}}
\textcolor{white}{A.~Shafee and {M. Baza \textit{et al}.}, ``Mimic learning to generate a
  shareable network intrusion detection model,'' \emph{Proc. of the IEEE
  Consumer Communications \& Networking Conference,Las Vegas, USA}, 2020.}

\textcolor{white}{\bibitem{baza12}}
\textcolor{white}{{M. Baza \textit{et al}.}, ``Blockchain-based distributed key management
  approach tailored for smart grid,'' in \emph{Combating Security Challenges in
  the Age of Big Data}.\hskip 1em plus 0.5em minus 0.4em\relax Springer, 2020.}

\textcolor{white}{\bibitem{baza13}}
\textcolor{white}{M.~Baza, j.~Baxter, N.~Lasla, M.~Mahmoud, M.~Abdallah, and M.~Younis,
  ``Incentivized and secure blockchain-based firmware update and dissemination
  for autonomous vehicles,'' \emph{Transportation and Power Grid in Smart
  Cities: Communication Networks and Services}, 2020.}

\textcolor{white}{\bibitem{Delay-Analysis}}
\textcolor{white}{R.~Amer, M.~M. Butt, M.~Bennis, and N.~Marchetti, ``Delay analysis for wireless
  {D2D} caching with inter-cluster cooperation,'' in \emph{IEEE Global
  Communications Conference ({GLOBECOM})}, Singapore, Dec. 2017.}

\textcolor{white}{\bibitem{8647532}}
\textcolor{white}{R.~{Amer}, M.~M. {Butt}, H.~{ElSawy}, M.~{Bennis}, J.~{Kibilda}, and
  N.~{Marchetti}, ``On minimizing energy consumption for {D2D} clustered
  caching networks,'' in \emph{IEEE 
  ({GLOBECOM})}, 2018.}

\textcolor{white}{\bibitem{amer2020joint}}
\textcolor{white}{R.~Amer, M.~M. Butt, and N.~Marchetti, ``Optimizing joint probabilistic caching
  and channel access for clustered {D2D} networks,'' in \emph{submitted to IEEE
  International Conference on Communications ({ICC})}, 2020.}

\textcolor{white}{\bibitem{8886101}}
\textcolor{white}{R.~{Amer}, H.~{ElSawy}, J.~{Kibi{\l}da}, M.~M. {Butt}, and N.~{Marchetti},
  ``Cooperative transmission and probabilistic caching for clustered {D2D}
  networks,'' in \emph{IEEE Wireless Communications and Networking Conference
  ({WCNC})}, April 2019, pp. 1--6.}

\textcolor{white}{\bibitem{8412262}}
\textcolor{white}{R.~Amer, M.~M. Butt, M.~Bennis, and N.~Marchetti, ``Inter-cluster cooperation
  for wireless {D2D} caching networks,'' \emph{IEEE Transactions on Wireless
  Communications}, vol.~17, no.~9, pp. 6108--6121, 2018.}

\textcolor{white}{\bibitem{amer2018optimizing}}
\textcolor{white}{R.~Amer, H.~Elsawy, M.~M. Butt, E.~A. Jorswieck, M.~Bennis, and N.~Marchetti,
  ``Optimizing joint probabilistic caching and communication for clustered
  {D2D} networks,'' \emph{arXiv preprint arXiv:1810.05510}.}

\textcolor{white}{\bibitem{amer2019performance}}
\textcolor{white}{R.~Amer, H.~Elsawy, J.~Kibi{\l}da, M.~M. Butt, and N.~Marchetti, ``Performance
  analysis and optimization of cache-assisted {CoMP} for clustered {D2D}
  networks,'' \emph{submitted to IEEE TMC}, 2019.}

\textcolor{white}{\bibitem{amer20200caching}}
\textcolor{white}{R.~Amer, M.~M. Butt, and N.~Marchetti, ``Caching at the edge in low latency
  wireless networks,'' \emph{Wireless Automation as an Enabler for the Next
  Industrial Revolution}, pp. 209--240, 2020.}

\textcolor{white}{\bibitem{chaccour2019Reliability}}
\textcolor{white}{C.~Chaccour, R.~Amer, B.~Zhou, and W.~Saad, ``On the reliability of wireless
  virtual reality at terahertz ({THz}) frequencies,'' in \emph{10th IFIP
  International Conference on New Technologies}, Spain, June. 2019.}

\textcolor{white}{\bibitem{7440654}}
\textcolor{white}{R.~{Amer}, A.~A. {El-Sherif}, H.~{Ebrahim}, and A.~{Mokhtar}, ``Cooperative
  cognitive radio network with energy harvesting: Stability analysis,'' in
  \emph{International Conference on Computing, Networking and Communications
  (ICNC)}, Feb 2016, pp. 1--7.}

\textcolor{white}{\bibitem{7605066}}
\textcolor{white}{R.~{Amer}, A.~A. {El-sherif}, H.~{Ebrahim}, and A.~{Mokhtar}, ``Cooperation and
  underlay mode selection in cognitive radio network,'' in \emph{International
  Conference on FGCT}, Aug 2016,
  pp. 36--41.}

\textcolor{white}{\bibitem{7925123}}
\textcolor{white}{R.~{Amer}, A.~A. {El-Sherif}, H.~{Ebrahim}, and A.~{Mokhtar}, ``Stability
  analysis for multi-user cooperative cognitive radio network with energy
  harvesting,'' in \emph{IEEE ICCC}, Oct 2016, pp. 2369--2375.}

\end{thebibliography}
\end{document}